\documentclass[a4paper,onecolumn,11pt]{quantumarticle}
\pdfoutput=1
\usepackage[utf8]{inputenc}
\usepackage[english]{babel}
\usepackage[T1]{fontenc}
\usepackage{amsmath}
\usepackage{hyperref}
\usepackage{amsthm}
\usepackage{tikz}
\usepackage[misc]{ifsym}
\usepackage[numbers,sort&compress]{natbib}
\newtheorem{theorem}{Theorem}
\theoremstyle{remark}\newtheorem{lemma}{\hskip 1em Lemma}
\begin{document}

\title{Robust self-testing of multipartite GHZ-state measurements in quantum networks}

\author{Qing Zhou}
\author{Xin-Yu Xu}
\author{Shuai Zhao}
\author{Yi-Zheng Zhen}
\author{Li Li\footnotemark[1]$^{\ ,}$}
\author{Nai-Le Liu\footnotemark[2]$^{\ ,}$}
\author{Kai Chen\footnotemark[3]$^{\ ,}$}
\affiliation{Hefei National Laboratory for Physical Sciences at Microscale and Department of Modern Physics, University of Science and Technology of China, Hefei, Anhui 230026, People's Republic of China}
\affiliation{CAS Center for Excellence and Synergetic Innovation Center in Quantum Information and Quantum Physics, University of Science and Technology of China, Hefei, Anhui 230026, People's Republic of China}

\renewcommand{\thefootnote}{\fnsymbol{footnote}}
\footnotetext[1]{\href{mailto:eidos@ustc.edu.cn}{eidos@ustc.edu.cn}}
\footnotetext[2]{\href{mailto:eidos@ustc.edu.cn}{nlliu@ustc.edu.cn}}
\footnotetext[3]{\href{mailto:eidos@ustc.edu.cn}{kaichen@ustc.edu.cn}}

\maketitle
\begin{abstract}
Self-testing is a device-independent examination of quantum devices based on correlations of observed statistics. Motivated by elegant progresses on self-testing strategies for measurements [Phys.\ Rev.\ Lett.\ $\textbf{121}$, 250507 (2018)] and for states [New J.\ Phys.\ $\textbf{20}$,\ $083041$ (2018)], we develop a general self-testing procedure for multipartite generalized GHZ-state measurements. The key step is self-testing all measurement eigenstates for a general $N-$qubit multipartite GHZ-state measurement. Following our procedure, one only needs to perform local measurements on $N-2$ chosen parties of the $N$-partite eigenstate and maintain to achieve the maximal violation of tilted Clauser-Horne-Shimony-Holt (CHSH) Bell inequality for remaining two parties. Moreover, this approach is physically operational from an experimental point of view. It turns out that the existing result for three-qubit GHZ-state measurement is recovered as a special case. Meanwhile, we develop the self-testing method to be robust against certain white noise.
\end{abstract}

\section{Introduction}\label{sec:introduction}
The rapid development of quantum communication in recent years creates an exigent requirement 
for devising certification methods 
to guarantee correctness of quantum information tasks. To rule out any potential attacks by malicious third party, such certification methods must be device-independent. As the first device-independent tool, the Bell nonlocality has been extensively studied in recent decades \cite{brunner2014bell}.\ It has brought great breakthroughs in quantum physics. Recently, as the strongest form of device-independent certification, self-testing has been developed, which is also based on Bell nonlocality. Such certification method can characterize the target objects (quantum states, measurements) fully, only up to local isometries, in a device-independent manner.

Self-testing, acting as a device-independent certification method, has attracted lots of attention since the pioneer works of Mayers and Yao \cite{mayers2003self}. It can be used to certify entangled pure states and measurements \cite{yang2013robust,natarajan2016robust,mckague2012robust,bamps2015sum,vsupic2016self,vsupic2019device,goswami2018one,sarkar2019self,bharti2019robust,li2020self,coladangelo2018generalization,coopmans2019robust,kaniewski2017self,tavakoli2018self,wu2016device,yang2014robust,miklin2020semi,jebarathinam2019maximal,miklin2021universal,bharti2021graph}. Up to now, a wide range of entangled quantum states are proved to be self-testable, such as the elegant results for all pure bipartite entangled states \cite{coladangelo2017all}, three-qubit W states \cite{wu2014robust}, and graph states \cite{mckague2011self}. It has also been shown that all pure multipartite GHZ states and Dicke states can be self-tested \cite{vsupic2018self}. Recently, the self-testing method for quantum channels has also been developed \cite{sekatski2018certifying}. Moreover, there have been many applications about self-testing, such as quantum key distribution \cite{mckague2010generalized}, randomness expansion \cite{miller2016robust}, detection for entanglement \cite{bowles2018device}, certification of genuinely entangled subspaces \cite{baccari2020device,makuta2021self}, coarse-grained self-testing of a many-body singlet \cite{PhysRevLett.127.240401}, as well as verification of quantum computations \cite{gheorghiu2017rigidity,mckague2013interactive}. 

In this work, we will focus on self-testing entangled measurements in quantum networks. Self-testing entangled quantum measurements is of great potential to develop practical quantum networks, which has been preliminarily studied \cite{renou2018self,bancal2018noise}. For a star-network as shown in Fig.$~$1, where $N$ observers share entangled states with central node, respectively. A self-testable entangled measurement can guarantee the success of quantum information tasks, which are based on distributing entangled states between remote parties in such a network. Meanwhile, the entanglement between observers and central node can also be certified. In Ref.$~$\cite{renou2018self}, the authors presented a self-testing method for the Bell-state measurement (BSM) and three-qubit GHZ-state measurement (GSM). Furthermore, a more robust self-testing scheme for BSM has also been proposed in Ref.$~$\cite{bancal2018noise}. However, there have not been a detailed characterization for self-testing multipartite ($N>3$) entangled measurements directly.

\begin{figure}[ht]
\centering
\includegraphics[scale=0.3]{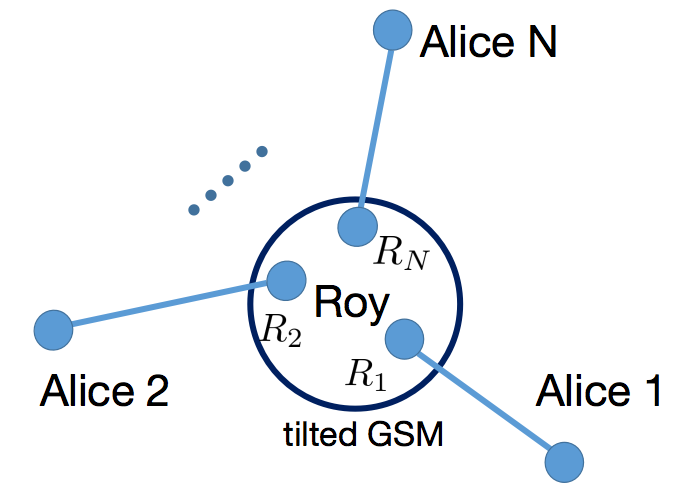}
\caption{Roy shares a Bell state with each of the other $N$ observers (Alice $1$, Alice $2$,$\dots$, Alice $N$). If Roy performs multipartite tilted GSM, then the state shared by Alice $1$, Alice $2$,$\dots$, Alice $N$ will be projected into $|GHZ^{r}_{\theta}\rangle$. Conversely, if Alice $1$, Alice $2$,$\dots$, Alice $N$ observe that they are projected into $|GHZ^{r}_{\theta}\rangle$ for all $r$, the measurement performed by Roy is equivalent to the tilted GSM.}
\label{fig1}
\end{figure}
By generalizing the idea of Ref.$~$\cite{renou2018self}, we present herewith a self-testing method for $N$-qubit tilted GSM, whose eigenstates are partially entangled $N$-qubit GHZ states (tilted GHZ states). That is to say, one of the eigenstates of $N$-qubit tilted GSM can be written as $|GHZ^{0}_{\theta}\rangle=\cos\theta|0\rangle^{\otimes N}+\sin\theta|1\rangle^{\otimes N}, \ \theta\in(0,\pi/4].$ In the $N+1$-partite star-network shown in Fig.$~$1, after performing measurement in central node (Roy), the remaining particles shared with Alice $1$, Alice $2$,$\dots$, and Alice $N$ will be projected into the eigenstates of Roy's measurement. If the particular correlations of remaining particles are observed by local measurements on Alice $1$, Alice $2$,$\dots$, and Alice $N$, the measurement performed by Roy is equivalent to an $N$-qubit tilted GSM, up to a local isometry. A local isometry is a linear local operation on quantum states that preserves inner products \cite{vsupic2020self}. Thus, the first step to self-test tilted GSM is self-testing all its measurement eigenstates, and the problem of self-testing tilted GSM can be converted to the problem of self-testing states. Motivated by the method of self-testing multipartite entangled states in Ref.$~$\cite{vsupic2018self}, we have developed further a general method for self-testing multipartite tilted GSM in a star-network, and the method is operational from an experimental point of view. We also show that one can self-test more entangled measurements by our developed method straightforwardly.

The paper is organized as following. In Sec.$~$\ref{sec:pre}, we provide a preparation review of tilted CHSH scenario which constitute important ingredient of our self-testing method. In Sec.$~$\ref{sec:multi}, self-testing method of multipartite tilted GSM is presented. In Sec.$~$\ref{sec:robust}, a noise-robust self-testing scheme of three-qubit GSM is presented, with the help of semidefinite program (SDP) method. Finally, we conclude our results and make a discussion on potential future works in Sec.$~$\ref{sec:conclude}. 

\section{Preliminaries}\label{sec:pre}
To self-test tilted multipartite GSM, the tilted CHSH inequality is necessary \cite{acin2012randomness}. Let us consider a task: Alice and Bob share a two-qubit state and they want to know whether the shared state is partially entangled or not. They perform local measurements (dichotomic observables) respectively. The tilted CHSH inequality is given by
\begin{equation}
\alpha\langle A_0\rangle+\langle A_{0}B_{0}\rangle+\langle A_{0}B_{1}\rangle+\langle A_{1}B_{0}\rangle-\langle A_{1}B_{1}\rangle\leq 2+\alpha,
\end{equation}
where the maximal value of violation is $\sqrt{8+2\alpha^{2}}, \alpha\in [0,2)$, $A_{i}$ and $B_{i}$ being observables with outcomes $\{-1,+1\}$ measured locally by Alice and Bob. Here, we omit the notation ``$\otimes$'' between systems $A$ and $B$ and write $A_{0}\otimes I$ as $A_{0}$ for short. After performing local measurements, if Alice and Bob obtain the maximal violation of tilted CHSH inequality, the state shared by them is a certain partially entangled two-qubit state (tilted Bell state). For detailed case, the four tilted Bell states $|Bell^{b}_{\theta}\rangle, b=0,1,2,3$ are given by
\begin{equation}
\nonumber
\begin{split}
&|Bell^{0}_{\theta}\rangle=c_\theta|00\rangle+s_\theta|11\rangle, \ |Bell^{1}_{\theta}\rangle=s_\theta|00\rangle-c_\theta|11\rangle,\\&|Bell^{2}_{\theta}\rangle=c_\theta|01\rangle+s_\theta|10\rangle, \ |Bell^{3}_{\theta}\rangle=s_\theta|01\rangle-c_\theta|10\rangle,
\end{split}
\end{equation}
where $c_{\theta}=\cos\theta,s_{\theta}=\sin\theta, \theta\in (0,\frac{\pi}{4}]$. Let $\mu$ satisfy $\tan\mu=\sin2\theta$ and $\sigma_{Z},\sigma_{X}$ be Pauli matrices. 
If one fixes the measurement settings of Alice and Bob as $A_{0}=\sigma_{Z},\  A_{1}=\sigma_{X},\ B_{0}=\cos\mu\sigma_{Z}+\sin\mu\sigma_{X},\ B_{1}=\cos\mu\sigma_{Z}-\sin\mu\sigma_{X}$, the output statistics obtained by these measurements will maximally violate some tilted CHSH inequalities. The maximal violation is $CHSH_{b}^{\alpha}=\langle Bell^{b}_{\theta}| W_{b}^{\alpha}|Bell^{b}_{\theta}\rangle=\sqrt{8+2\alpha^{2}}$ with $\alpha=2\cos 2\theta/\sqrt{1+\sin^{2}2\theta}$, where
\begin{equation}
\begin{split}
&W_{0}^{\alpha}=  \alpha A_{0}+A_{0}B_{0}+A_{0}B_{1}+A_{1}B_{0}-A_{1}B_{1},\notag\\& 
W_{1}^{\alpha}=-\alpha A_{0}+A_{0}B_{0}+A_{0}B_{1}-A_{1}B_{0}+A_{1}B_{1},\notag\\
&W_{2}^{\alpha}=-W_{1}^{\alpha},
W_{3}^{\alpha}=-W_{0}^{\alpha}.\notag
\end{split}
\end{equation}
Here the $W_{b}^{\alpha}$ is Bell operator acting on the Hilbert space $\mathcal{H}_{A}\otimes\mathcal{H}_{B}$ of Alice and Bob. It is easy to show that the eigenvalue $\sqrt{8+2\alpha^{2}}$ of the Bell operator $W_{b}^{\alpha}$ is nondegenerate with associated eigenvector $|Bell^{b}_{\theta}\rangle$. Hence, if the maximal violation of $CHSH_{b}^{\alpha}$ is $\sqrt{8+2\alpha^{2}}$, the shared state will be $|Bell^{b}_{\theta}\rangle$. One can discriminate the four tilted Bell states by the maximally violations of four tilted Bell inequality with fix measurement settings. Furthermore, other tilted Bell states that are local-unitary (constructed by $\sigma_{Z},\sigma_{X}$) equivalent to the above four tilted Bell states can also be discriminated. For example, the state $|\Phi\rangle=c_{\theta}|00\rangle-s_{\theta}|11\rangle=\sigma_{Z_{A}}|Bell_{\theta}^{0}\rangle$. It can maximally violate tilted CHSH inequality with $CHSH_{\theta}=\langle\sigma_{Z_{A}}W_{0}^{\alpha}\sigma_{Z_{A}}^{\dag}\rangle=\alpha \langle A_{0}\rangle+\langle A_{0}B_{0}+A_{0}B_{1}-A_{1}B_{0}+A_{1}B_{1}\rangle$ and fixed measurements given above.

In the entanglement swapping scenario \cite{zukowski1993event} shown in Fig.$~$2, let Charlie perform tilted BSM whose measurement eigenstates are tilted Bell states with outcomes $b$. Then, the remaining state will be projected into one of the four tilted Bell states $|Bell^{b}_{\theta}\rangle$ conditioned on the outcomes $b$. Conversely, if one finds that Alice and Bob share tilted Bell states $|Bell^{b}_{\theta}\rangle$ for $b\in\{0,1,2,3\}$, the performed measurement of Charlie is a tilted BSM. Motivated by this idea, we will develop a procedure for preforming self-testing of tilted multipartite GSM.
\begin{figure}[ht]
\centering
\includegraphics[scale=0.3]{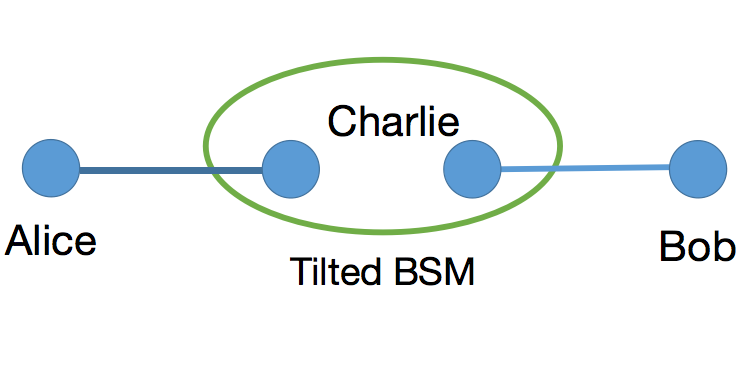}
\caption{An entanglement swapping scenario: Charlie shares a maximally entangled two-qubit state with each of the other two observers (Alice and Bob). If Charlie performs tilted BSM and obtains outcome b, then Alice and Bob will be projected into $|Bell^{b}_{\theta}\rangle$, i.e.,  Alice and Bob can observe the maximal violation of the specific tilted CHSH inequality with $CHSH_b^{\alpha}=\sqrt{8+2\alpha^{2}}$. }
\label{fig2}
\end{figure}

\section{Self-test tilted multipartite GSM}\label{sec:multi}
As shown in Ref.$~$\cite{choi1975completely}, any completely positive and trace preserving (CPTP) map can be implemented, by tracing out degrees of freedom that does not involve effective information after applying a local isometry. Therefore, one can adopt the approach presented in \cite{kaniewski2017self,renou2018self,vsupic2020self} to present the definition for self-testing multipartite measurements via simulation: denote an ideal $d$-outcome measurement for Roy acting on $\mathcal{H}_{R_{1}^{\prime}}\otimes\mathcal{H}_{R_{2}^{\prime}}\otimes\dots\otimes\mathcal{H}_{R_{N}^{\prime}}$ as $\mathcal{P}^{\prime}=\{P_{R_{1}^{\prime}R_{2}^{\prime}\dots R_{N}^{\prime}}^{\prime r}\}_{r=1}^{d}$, and a real measurement acting on $\mathcal{H}_{R_{1}}\otimes\mathcal{H}_{R_{2}}\otimes\dots\otimes\mathcal{H}_{R_{N}}$ as $\mathcal{P}=\{P_{R_{1}R_{2}\dots R_{N}}^{r}\}_{r=1}^{d}$. If there exist completely positive and unital maps $\Lambda_{R_{j}}:\ \mathcal{L}(\mathcal{H}_{R_{j}})\to \mathcal{L}(\mathcal{H}_{A_{j}^{\prime}})$, for $j\in\{1,2,\dots,N\}$, such that 
\begin{equation}
\Lambda_{R_{1}}\otimes\Lambda_{R_{2}}\otimes\dots\otimes\Lambda_{R_{N}}(P_{R_{1}R_{2}\dots R_{N}}^{r})=P_{R_{1}^{\prime}R_{2}^{\prime}\dots R_{N}^{\prime}}^{\prime r},
\label{eq:cptp}
\end{equation}
for all $r$, we say $\mathcal{P}$ is capable of simulating $\mathcal{P}^{\prime}$. In the above definition, we adopt the assumption that the different physical sources are independent in a quantum network. The construction of a quantum network as shown in Fig.$~$1 guarantees the well-defined $N$-partition for Roy's measurement device, i.e., $\mathcal{H}_{R}=\mathcal{H}_{R_{1}}\otimes\mathcal{H}_{R_{2}}\otimes\dots\otimes\mathcal{H}_{R_{N}}$.

The idea of our self-testing method relies on the task of entanglement swapping as shown in Fig.$~$1. There are $N$ initially uncorrelated parties Alice $1$ ($A_{1}$), Alice $2$ ($A_{2}$),$\dots$, Alice $N$ ($A_{N}$). They are independently entangled with an additional party, Roy. Specifically, the $A_{i}$ and Roy share a Bell state $|Bell^{0}_{\pi/4}\rangle_{A_{i}R_{i}}\in \mathcal{H}_{A_{i}}\otimes\mathcal{H}_{R_{i}}, i\in\{1,2,\dots,N\}$. To distribute entanglement between $A_{1},A_{2},\dots,A_{N}$ in such a quantum network, Roy performs the tilted GSM and obtains outcomes $r\in\{k_{1}k_{2}\dots k_{N}\}$ with $k_{1},k_{2},\cdots,k_{N}\in\{0,1\}$. For simplicity, we denote the outcomes as $r\in\{0,1,\dots,2^{N}-1\}$. Then, the states shared by $A_{1},A_{2},\dots,A_{N}$ are projected to one of the $2^{N}$ tilted GHZ states $|GHZ^{r}_{\theta}\rangle$ based on the outcome $r$.\ The tilted GHZ states are measurement eigenstates of tilted GSM given by 
\begin{equation}\nonumber
\begin{split}
|GHZ^{r}_{\theta}\rangle=(-1)^{\sum\limits_{i=1}^{N}k_{i}}&\cos\theta|k_{1}k_{2}\dots k_{N}\rangle+\sin\theta|\bar{k}_{1}\bar{k}_{2}\dots\bar{k}_{N}\rangle,
\end{split}
\end{equation}
where $\bar{k}_{i}=1-k_{i}$ and $k_{i}\in\{0,1\},\ i\in\{1,\dots,N\}$.\ The tilted GSM can be denoted as $GSM_{\theta}=\{GHZ^{r}_{\theta}\}_{r=0}^{2^{N}-1}$, with $GHZ^{r}_{\theta}=|GHZ^{r}_{\theta}\rangle\langle GHZ^{r}_{\theta}|$.\ If $A_{1},A_{2},\dots,A_{N}$ obtain the outcomes $r$ of Roy, they can apply a special local unitary operation on their qubits, so that they share a special tilted GHZ state. With the above operations, we have implemented distribution of entanglement between $N$ remote parties.

The self-testing procedure is similar to the task of entanglement swapping, without assumptions on the dimensions, initial states and operators. From now on, let us adopt labels on the same letter to make a distinction between two Hilbert spaces, e.g., $Q$ and $Q^{\prime}$. Specially, the $Q^{\prime}$ is in a two-dimensional Hilbert space and the dimension of $Q$ is unknown. Let us start with presenting a self-testing method for $N$-partite tilted GHZ states given in Ref.$~$\cite{vsupic2018self}.
\begin{lemma}
(Please Refer to Ref.$~$\cite{vsupic2018self}). Suppose an $N$-partite state $|\psi\rangle$, and a pair of binary observables $A_{0,i}$, $A_{1,i}$ for the $i$-th party, for $i=1,...,N$. For an observable $D$, let $P_{D}^{a}=[I+(-1)^{a}D]/2, a\in\{0,1\}$. Let $\mu$ satisfy $\tan\mu=\sin 2\theta$, $Z_{i}=A_{0,i}$, and $X_{i}=A_{1,i}$, for $i=1,...,N-1$. Then, let $Z_{N}^{*}$ be  $(A_{0,N}+A_{1,N})/(2\cos\mu)$ with zero eigenvalues replaced by 1 and $X_{N}^{*}$ be $(A_{0,N}-A_{1,N})/(2\sin\mu)$ with zero eigenvalues replaced by 1. Define $Z_{N}=Z_{N}^{*}|Z_{N}^{*}|^{-1}$ and $X_{N}=X_{N}^{*}|X_{N}^{*}|^{-1}$. If the following relations are satisfied:
\begin{gather}
\langle\psi|P_{A_{0,i}}^{0}|\psi\rangle=\langle\psi|P_{A_{0,i}}^{0}P_{A_{0,j}}^{0}|\psi\rangle=c_{\theta}^{2}, \ \ \forall i,j\in \{1,...,N-1\},\notag\\ \langle\psi|\prod_{i=1}^{N-2} P_{A_{1,i}}^{a_{i}}|\psi\rangle=\frac{1}{2^{N-2}},  \ \ \forall a_{i}\in \{0,1\},\notag\\ \langle\psi|(\prod_{i=1}^{N-2} P_{A_{1,i}}^{a_{i}})(\alpha A_{0,N-1}\otimes I_{N}+A_{0,N-1}A_{0,N}+A_{0,N-1}A_{1,N}+\notag\\ (-1)^{\sum_{i=1}^{N-2}a_{i}}(A_{1,N-1}A_{0,N}-A_{1,N-1}A_{1,N}))|\psi\rangle =  \frac{\sqrt{8+2\alpha^{2}}}{2^{N-2}},\notag\\ \forall a_{i}\in \{0,1\}, \notag
\end{gather}
where $\alpha=2\cos 2\theta/\sqrt{1+\sin^{2}2\theta}$ and $c_{\theta}=\cos\theta,\theta\in(0,\pi/4]$, there exists a local isometry $\Phi$ such that
\begin{equation}\nonumber
\Phi(|\psi\rangle)=|junk\rangle |GHZ_{\theta}^{0}\rangle,
\end{equation}
for some junk state $|junk\rangle$. 
Hence, these relations for correlations self-test the state $|GHZ_{\theta}^{0}\rangle=\cos\theta|0\rangle^{\otimes N}+\sin\theta|1\rangle^{\otimes N}$. 
\end{lemma}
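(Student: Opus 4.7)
The plan is to reduce the $N$-partite statement to the bipartite tilted CHSH self-test recalled in Sec.~\ref{sec:pre}, by exploiting the conditioning structure of the hypotheses. Two ingredients work in tandem: the first block of conditions forces perfect $Z$-correlations among parties $1$ through $N-1$, constraining $|\psi\rangle$ to a subspace indexed by a single common $Z$-outcome; while the remaining conditions, after projecting parties $1,\dots,N-2$ onto their $A_{1,i}$-eigenbases (which happens with uniform probability), give maximal tilted CHSH violation on the conditional state of parties $N-1$ and $N$, with a sign pattern fixed by $(-1)^{\sum_{i}a_{i}}$.

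First I would extract the $Z$-structure. From $\langle\psi|P_{A_{0,i}}^{0}P_{A_{0,j}}^{0}|\psi\rangle=\langle\psi|P_{A_{0,i}}^{0}|\psi\rangle=c_\theta^{2}$ combined with $\langle\psi|P_{A_{0,j}}^{0}|\psi\rangle=c_\theta^{2}$, it follows that $P_{A_{0,i}}^{0}P_{A_{0,j}}^{1}|\psi\rangle=0$, and symmetrically for the other mismatched pairs. Iterating across $i,j\in\{1,\dots,N-1\}$ one concludes
\begin{equation*}
|\psi\rangle \;=\; c_\theta\,|0\cdots 0\rangle_{Z_{1}\cdots Z_{N-1}}|\phi_{0}\rangle \;+\; s_\theta\,|1\cdots 1\rangle_{Z_{1}\cdots Z_{N-1}}|\phi_{1}\rangle,
\end{equation*}
for some normalized states $|\phi_{0}\rangle,|\phi_{1}\rangle$ on the $N$-th party together with any purifying register.

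Next, for each outcome string $(a_{1},\dots,a_{N-2})$ the normalized post-selected state on parties $N-1,N$ inherits the maximal violation $\sqrt{8+2\alpha^{2}}$ of a tilted CHSH inequality directly from the third hypothesis. Applying the CHSH self-test at every $(a_{1},\dots,a_{N-2})$ then certifies that $Z_{N-1},X_{N-1},Z_{N},X_{N}$ act as the Pauli operators $\sigma_{Z},\sigma_{X}$ on the extracted qubit, and yields a local isometry mapping the sub-state onto $|Bell_{\theta}^{b(a)}\rangle$ with $b(a)$ determined by $(-1)^{\sum a_{i}}$. The regularized definitions $Z_{N}=Z_{N}^{\ast}|Z_{N}^{\ast}|^{-1}$ and $X_{N}=X_{N}^{\ast}|X_{N}^{\ast}|^{-1}$ make these unitaries well-defined and consistent across all conditionings. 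Combining with the $Z$-basis decomposition of the first step, the standard SWAP isometry $\Phi=\bigotimes_{i=1}^{N}\Phi_{i}$, with each $\Phi_{i}$ built from $Z_{i},X_{i}$ on an ancillary qubit, yields $\Phi(|\psi\rangle)=|junk\rangle\,|GHZ_{\theta}^{0}\rangle$.

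The hard part is the bookkeeping in the final assembly: showing that the qubit subspaces certified at different conditioning strings coincide as a single global subspace---equivalently, that the junk states collapse to one conditioning-independent $|junk\rangle$---and that the sign factor $(-1)^{\sum a_{i}}$ matches the relative sign inherited from the $Z$-correlated decomposition after conjugation by the $X_{i}$. A secondary technical point is verifying that $Z_{N}^{\ast}$ and $X_{N}^{\ast}$ have nonzero polar parts on the support of $|\psi\rangle$, so that $Z_{N},X_{N}$ are bona fide $\pm 1$-valued observables on the states actually produced by the protocol; this is expected to follow from the maximal CHSH violations, which saturate only when the associated marginals lie on the Tsirelson boundary.
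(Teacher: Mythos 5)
Your outline follows essentially the same strategy as the source of this lemma; note that the paper itself supplies no proof here but imports the statement verbatim from Ref.~\cite{vsupic2018self}, so the comparison is against that reference. Your first step is exactly right: since $P^{0}_{A_{0,i}}P^{1}_{A_{0,j}}$ is a projector with vanishing expectation, the mismatched components of $|\psi\rangle$ are annihilated, and only the all-$0$ and all-$1$ branches (of weights $c_{\theta}^{2}$ and $s_{\theta}^{2}$) survive among parties $1,\dots,N-1$. The one place where your plan diverges from, and is strictly harder than, the actual argument is the final assembly. You propose to run the tilted-CHSH self-test separately at each conditioning string $a$ and then glue the resulting per-$a$ isometries and junk states into one; that gluing is indeed delicate, and the reference avoids it entirely. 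What is done instead is to convert each conditional maximal violation into algebraic identities on the conditional state (anticommutation of $X_{f},Z_{f}$ for $f\in\{N-1,N\}$, agreement of $Z_{N-1}$ and $Z_{N}$, and a ratio relation of the form $s_{\theta}X_{N-1}X_{N}(I+Z_{N-1})|\psi_{a}\rangle=(-1)^{\sum_{i}a_{i}}c_{\theta}(I-Z_{N-1})|\psi_{a}\rangle$), and then to sum these over all $a\in\{0,1\}^{N-2}$ using $\sum_{a}\prod_{i}P^{a_{i}}_{A_{1,i}}=I$ and $\sum_{a}(-1)^{\sum_{i}a_{i}}\prod_{i}P^{a_{i}}_{A_{1,i}}=\prod_{i=1}^{N-2}X_{i}$. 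This yields conditioning-independent operator identities on $|\psi\rangle$ itself --- which is precisely where the sign $(-1)^{\sum_{i}a_{i}}$ earns its keep, resumming into the string $X_{1}\cdots X_{N-2}$ needed for the GHZ cross term --- after which a single SWAP isometry is applied once to $|\psi\rangle$, and no reconciliation of qubit subspaces or junk states across different $a$ is ever required. Your secondary point about regularizing $Z_{N},X_{N}$ is well taken and is handled in the reference exactly as you anticipate. With the assembly step replaced by this summation argument, your sketch becomes a faithful account of the proof.
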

The junk state in the Lemma 1 can be any state and can be removed by tracing out the $A_{1}A_{2}\dots A_{N}$ space. It should be noted that this self-testing method is also suitable for a general $\rho$ \cite{vsupic2018self}. Without loss of generality, let the $N$-partite state be a pure state. Here, the $Z_{N}$ and $X_{N}$ act on $|\psi\rangle$ in the same way as $(A_{0,N}+A_{1,N})/(2\cos\mu)$ and $(A_{0,N}-A_{1,N})/(2\sin\mu)$, respectively \cite{vsupic2020self}. For details, the ideal measurements achieving these correlations in the Lemma 1 are: $A_{0,i}^{\prime}=\sigma_{Z}$, $A_{1,i}^{\prime}=\sigma_{X}$, for $i=1,\dots ,N-1$, and $A_{0,N}^{\prime}=\cos \mu\sigma_{Z}+\sin\mu\sigma_{X}$, $A_{1,N}^{\prime}=\cos \mu\sigma_{Z}-\sin\mu\sigma_{X}$. 

From the Lemma 1, all partially entangled $N$-partite GHZ states can be self-tested by checking whether the projected state of the remaining two parties ($A_{N-1}$ and $A_{N}$) maximally violates the tilted CHSH inequality. The remaining two parties are the parties after performing local measurements on the other $N-2$ parties. Moreover, for different $r\in\{0,1,\dots,2^{N}-1\}$, the $|GHZ_{\theta}^{r}\rangle$ can all be self-tested by correlations in the Lemma 1 with different measurement settings up to local isometries. In other words, one can obtain a local isometry, such that $\Phi^{r}(|\psi^{r}\rangle)=|junk\rangle|GHZ_{\theta}^{r}\rangle,$ for each $r$. As the isometry can always be constructed by local operations which does not depend on $r$, one can always construct a single isometry, such that $\Phi(|\psi^{r}\rangle)=|junk\rangle|GHZ_{\theta}^{r}\rangle.$ The detailed description will be shown in the next lemma. 

Now, let us firstly introduce some notations. For an observable $O^{\prime }$ acting on Hilbert space $\mathcal{H}^{\prime}=\otimes_{i=1}^{N}\mathcal{H}_{A_{i}^{\prime}}$, let $\widetilde{O^{\prime}}^{r}=U^{\prime r\dag}O^{\prime }U^{\prime r}$, where $U^{\prime r}=\otimes_{i=1}^{N}{U^{\prime r}_{A_{i}^{\prime}}}$ acting on $\mathcal{H}^{\prime}$. The $\mathcal{H}_{A_{i}^{\prime}},i\in\{1,2,\dots,N\}$, are two dimensional Hilbert spaces. The unitary operator $U^{\prime r}$ satisfies the equation $U^{\prime r}|GHZ^{r}_{\theta}\rangle=|GHZ^{0}_{\theta}\rangle$ and is constructed by the product of identity matrix $I^{\prime}$, and Pauli matrices $X^{\prime},Z^{\prime}$. Then, one can define $\widetilde{O}^{r}=U^{ r\dag}OU^{ r}$ by replacing the $I^{\prime},X^{\prime},Z^{\prime}$ in $\widetilde{O^{\prime}}^{r}$ with $I,X,Z$. By the above special unitary transformation, one can obtain following Lemma 2.
\begin{lemma}
Let $|\psi\rangle$ be an $N$-partite state, and let $A_{0,i}$, $A_{1,i}$ be a pair of binary observables for the $i$-th party, for $i=1,...,N$. Suppose that, for all $r\in\{0,1,\dots,2^{N}-1\}$, the following relations are satisfied:
\begin{gather}
\langle\psi|\widetilde{P_{A_{0,i}}^{0}}^{r}|\psi\rangle=\langle\psi|\widetilde{P_{A_{0,i}}^{0}}^{r}\widetilde{P_{A_{0,j}}^{0}}^{r}|\psi\rangle=c^{2}_{\theta}, \ \forall i,j\in \{1,...,N-1\},\\ \langle\psi|\prod_{i=1}^{N-2} \widetilde{P_{A_{1,i}}^{a_{i}}}^{r}|\psi\rangle=\frac{1}{2^{N-2}},  \ \ \forall a_{i}\in \{0,1\},\\ \langle\psi|(\prod_{i=1}^{N-2} \widetilde{P_{A_{1,i}}^{a_{i}}}^{r})\widetilde{W_{\bar{a}}^{\alpha}}^{r}|\psi\rangle =  \frac{\sqrt{8+2\alpha^{2}}}{2^{N-2}}, \forall a_{i}\in \{0,1\} \label{eq:lemma1}
\end{gather}
where $\bar{a}\equiv a_{1}\dots a_{N-2}$ and 
\begin{equation}\nonumber
\begin{split}
\widetilde{W_{\bar{a}}^{\alpha}}^{0}=W_{\bar{a}}^{\alpha}=&\alpha A_{0,N-1}\otimes I_{N}+A_{0,N-1}A_{0,N}+A_{0,N-1}A_{1,N}+ \\&(-1)^{\Sigma_{i=1}^{N-2}a_{i}}(A_{1,N-1}A_{0,N}-A_{1,N-1}A_{1,N}).
\end{split}
\end{equation}
The detailed forms for $\widetilde{P_{A_{0,i}}^{0}}^{r},\widetilde{P_{A_{1,i}}^{a_{i}}}^{r}$ are easy to calculate and the details for $\widetilde{W_{\bar{a}}^{\alpha}}^{r}$ as an example are provided in the Appendix B. The measurements here are the same as shown in the Lemma 1. Then, there exists a single local isometry such that $\Phi(|\psi^{r}\rangle)=|junk\rangle|GHZ_{\theta}^{r}\rangle,$ for all $r$.
\end{lemma}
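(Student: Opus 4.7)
The plan is to apply Lemma~1 separately for each outcome $r$ and show that the family of isometries it produces collapses to a single $r$-independent one by exploiting the Pauli structure of $U^{r}$. Fix $r$ and interpret the state appearing in the hypotheses as the conditional state $|\psi^{r}\rangle$ produced on Alice's side when Roy obtains outcome $r$. Using unitarity together with the definition $\widetilde{O}^{r}=U^{r\dagger}OU^{r}$, every relation can be rewritten in the form $\langle U^{r}\psi^{r}|\,O\,|U^{r}\psi^{r}\rangle = \text{(the corresponding Lemma 1 value)}$. Hence the rotated state $|\phi^{r}\rangle := U^{r}|\psi^{r}\rangle$ satisfies the hypotheses of Lemma~1 with respect to the original, $r$-independent observables $A_{0,i},A_{1,i}$ and the associated $Z_{N},X_{N}$.

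By Lemma~1 there is a local isometry $\Phi$, constructed from the $A_{0,i},A_{1,i}$ alone and therefore identical for every $r$, such that $\Phi(|\phi^{r}\rangle) = |\mathrm{junk}^{r}\rangle|GHZ^{0}_{\theta}\rangle$. To transfer this conclusion back to $|\psi^{r}\rangle = U^{r\dagger}|\phi^{r}\rangle$ without altering $\Phi$, I invoke the Pauli-propagation property of the self-testing SWAP isometry: because $U^{r}$ is a tensor product of the very operators $I,X,Z$ used to build $\Phi$, the standard gate-level commutation identities, e.g.\ $\mathrm{CNOT}(X\otimes I)=(X\otimes X)\mathrm{CNOT}$ and its $Z$-analog, combine to yield the operator identity
\begin{equation*}
\Phi\circ U^{r\dagger} \;=\; (I\otimes U^{r\dagger}_{\mathrm{anc}})\circ\Phi,
\end{equation*}
which transports a Pauli applied on Alice's physical side before the isometry to the same Pauli applied on the trusted ancilla $A_{1}'\cdots A_{N}'$ afterwards. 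Applying this to $|\phi^{r}\rangle$ and using the defining property $U^{r\dagger}_{\mathrm{anc}}|GHZ^{0}_{\theta}\rangle = |GHZ^{r}_{\theta}\rangle$ immediately gives
\begin{equation*}
\Phi(|\psi^{r}\rangle) \;=\; |\mathrm{junk}^{r}\rangle|GHZ^{r}_{\theta}\rangle,
\end{equation*}
which is precisely the claim, with a single isometry $\Phi$ valid for every $r$.

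The chief obstacle is to justify the Pauli-propagation identity in the device-independent setting, where $X_{i},Z_{i}$ are only abstract dichotomic observables rather than genuine Pauli matrices. This requires unpacking the explicit gate-level form of the SWAP isometry inherited from Lemma~1 (see Ref.~\cite{vsupic2018self}) and checking that the algebraic consequences of Lemma~1's relations---most importantly the effective anticommutation of $Z_{i}$ and $X_{i}$ on the support of $|\phi^{r}\rangle$---are strong enough to make each controlled-gate gadget commute with the relevant Pauli in the required way. Once this structural point is granted, everything else is linear-algebraic bookkeeping: the $U^{r\dagger}$ on Alice's side is converted to $U^{r\dagger}_{\mathrm{anc}}$ on the trusted side, and that ancilla Pauli rotates $|GHZ^{0}_{\theta}\rangle$ into $|GHZ^{r}_{\theta}\rangle$, producing the single isometry promised by the lemma.
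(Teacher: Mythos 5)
Your proposal is correct and follows essentially the same route as the paper: both arguments reduce the general-$r$ hypotheses to Lemma~1 via the unitary $U^{r}$ built from $I,X,Z$, and both use the SWAP-gate propagation identity $S_{X_f,Z_f}\cdot X\cdot|0\rangle|\xi_f\rangle=X^{\prime}\cdot S_{X_f,Z_f}\cdot|0\rangle|\xi_f\rangle$ (from Lemma~1 of Renou et al.) to push $U^{r\dagger}$ through the fixed isometry onto the trusted ancillas, where it rotates $|GHZ^{0}_{\theta}\rangle$ into $|GHZ^{r}_{\theta}\rangle$. The only cosmetic difference is directional --- you rotate $|\psi^{r}\rangle$ forward into a Lemma-1 state, while the paper rotates $|\psi^{0}\rangle$ backward --- and you correctly flag that the propagation identity holds only on the relevant support rather than as a global operator identity.
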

\begin{proof}
For $r=0$, the correlations in the Lemma 2 are same as the Lemma 1. Hence these correlations self-test state $|GHZ^{0}_{\theta}\rangle$. Denote $|\psi\rangle$ in the self-testing procedure as $|\psi^{0}\rangle$. From the Lemma 1, there exists a local isometry $\Phi$ such that $\Phi(|\psi^{0}\rangle)=|junk\rangle|GHZ^{0}_{\theta}\rangle$. Meanwhile, $X_{f}^{2}=Z_{f}^{2}=I$ and $X_f,Z_f$ anti-commute over the support of the state $|\psi^{0}\rangle$, for all $f\in\{A_{1},A_{2},\dots,A_{N}\}$ \cite{vsupic2018self}. Then, one can construct this isometry by ancillary qubits $|0\rangle^{\otimes N}$ and swap gates $\{S_{X_{f},Z_{f}}\}$ as
\begin{equation} 
\Phi(|\psi^{0}\rangle)=|junk\rangle|GHZ^{0}_{\theta}\rangle=(\otimes_{i=1}^{N}S_{X_{A_{i}},Z_{A_{i}}})|0\rangle^{\otimes N}|\psi^{0}\rangle.
\end{equation}
The detailed form of a swap gate is shown in Fig.$~$3. From the Lemma 1 in Ref.$~$\cite{renou2018self}, one knows that $S_{X_{f},Z_{f}}\cdot X\cdot |0\rangle |\xi_{f}\rangle=X^{\prime}\cdot S_{X_{f},Z_{f}}\cdot |0\rangle |\xi_{f}\rangle$ and $S_{X_{f},Z_{f}}\cdot Z\cdot |0\rangle |\xi_{f}\rangle=  Z^{\prime}\cdot S_{X_{f},Z_{f}}\cdot |0\rangle |\xi_{f}\rangle$. Let $S_{A_{1}A_{2}\dots A_{N}}=(\otimes_{i=1}^{N}S_{X_{A_{i}},Z_{A_{i}}})$. As the $U^{r}$ is constructed by $I,X,Z$, one has 
\begin{equation}\nonumber
\begin{split}
\Phi(U^{r\dag}|\psi^{0}\rangle)&=S_{A_{1}A_{2}\dots A_{N}}|0\rangle^{\otimes N} U^{r\dag}|\psi^{0}\rangle\\&=U^{\prime r\dag}S_{A_{1}A_{2}\dots A_{N}}  |0\rangle^{\otimes N}|\psi^{0}\rangle \\&=U^{\prime r\dag}\Phi(|\psi^{0}\rangle)=|junk\rangle\otimes U^{\prime r\dag}|GHZ^{0}_{\theta}\rangle\\&=|junk\rangle\otimes|GHZ^{r}_{\theta}\rangle.
\end{split}
\end{equation}
Here $U^{r\dag}|\psi^{0}\rangle=|\psi\rangle$. One has $\Phi(|\psi\rangle)=|junk\rangle|GHZ^{r}_{\theta}\rangle$. Therefore, the relations for correlations in the Lemma 2 self-test state $|GHZ^{r}_{\theta}\rangle$. The $|\psi\rangle$ can be denoted as $|\psi^{r}\rangle$. Thus, one has $\Phi(|\psi^{r}\rangle)=|junk\rangle|GHZ^{r}_{\theta}\rangle$

\begin{figure}[ht]
\centering
\includegraphics[scale=0.3]{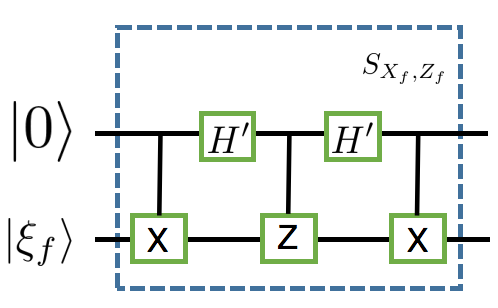}
\caption{Swap gate is constructed by unitary $X$, $Z$ and $H^{\prime}$, where $H^{\prime}$ is the Hadamard gate, and $X$, $Z$ are anti-commute over the support of the state $|\xi_{f}\rangle\in\mathcal{H}$. The $|0\rangle$ is in the qubit Hilbert space $\mathcal{H}^{\prime}$. }
\label{fig3}
\end{figure}
\end{proof}
From the Lemma 2, self-testing method with fixed measurements can be used to distinguish special entangled pure states. Here, let $\{|GHZ_{\theta}^{r}\rangle\}_{r=0}^{2^{N}-1}$ be reference states and $|GHZ_{\theta}^{0}\rangle$ be a standard reference state. For example, there is a set of states $\{|\psi^{s}\rangle\}_{s=0}^{2^{N}-1}$ shared by $A_{1},A_{2},\dots,A_{N}$. If one shared state $|\psi^{s_{1}}\rangle$ satisfies the correlations in the Lemma 2 with $r=0$, one can specify the shared state $|\psi^{s_{1}}\rangle$ as state $|\psi^{0}\rangle$ according to standard reference state $|GHZ_{\theta}^{0}\rangle$. Then, for another shared state $|\psi^{s_{2}}\rangle$ with $s_{2}\in\{0,1,2,\dots s_{1}-1,s_{1}+1,\dots,2^{N}-1\}$ , if it satisfies correlations in the Lemma 2 for one $r$ with $r\in\{1,2,\dots,2^{N}-1\}$, e.g., $r=3$, then, one resets the $s_{2}$ as $s_{2}=3$. In other words, the state $|\psi^{s_{2}}\rangle$ can be rewritten as $|\psi^{3}\rangle$ and these correlations have self-tested the $|GHZ_{\theta}^{3}\rangle$. Therefore, the states $|\psi^{s_{1}}\rangle$ and $|\psi^{s_{2}}\rangle$ are actually different. Now, the main result of the paper is following. 
\begin{theorem}
Let $A_{1},A_{2},\dots,A_{N}$ share respectively a pair of quantum state with Roy as $\tau_{A_{1}R_{1}A_{2}R_{2}\dots A_{N}R_{N}}=\tau_{A_{1}R_{1}}\otimes\tau_{A_{2}R_{2}}\otimes\cdots\otimes\tau_{A_{N}R_{N}}$ and let $\mathcal{R}=\{R_{R_{1}R_{2}\cdots R_{N}}^{r}\}_{r=0}^{2^{N}-1}$ be a $2^{N}$-outcome measurement acting on $\mathcal{H}_{R_{1}}\otimes\mathcal{H}_{R_{2}}\otimes\cdots\otimes\mathcal{H}_{R_{N}}$. For the $A_{1},A_{2},\dots,A_{N}$, if there exist measurements such that the observed correlations conditioned on outcome $r$ of Roy's measurement satisfy the relations in the Lemma 2, then there exist completely positive and unital maps $\Lambda_{R_{i}}:\ \mathcal{L}(\mathcal{H}_{R_{i}})\to \mathcal{L}(\mathcal{H}_{A_{i}^{\prime}})$, $i\in\{1,2,\dots,N\}$, for $dim(\mathcal{H}_{A_{i}^{\prime}})=2$ such that
\begin{equation}
\Lambda_{R_{1}}\otimes\Lambda_{R_{2}}\otimes\cdots\otimes\Lambda_{R_{N}}(R_{R_{1}R_{2}\cdots R_{N}}^{r})=GHZ_{\theta}^{r}
\label{eq:theoremm}
\end{equation}
for $r\in\{0,1,2,\dots,2^{N}-1\}$.
\end{theorem}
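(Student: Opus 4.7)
The plan is to reduce the measurement self-testing to the state self-testing already established in Lemma 2, and then to transfer the resulting local isometry from the Alices' side across to Roy's side via the shared bipartite states, using a channel--state duality argument.

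First, I would condition on Roy's outcome. For each $r\in\{0,1,\ldots,2^{N}-1\}$, let $\rho^{r}_{A_{1}\cdots A_{N}}$ be the normalised post-measurement state of the Alices given outcome $r$, obtained from $\tau$ and $R^{r}_{R_{1}\cdots R_{N}}$ via the Born rule $p(r)\rho^{r} = \operatorname{Tr}_{R}[(I_{A}\otimes R^{r})\tau]$. The hypothesis that the observed correlations satisfy the relations of Lemma 2 for every $r$ lets me invoke Lemma 2 directly: there exists a \emph{single} local isometry $\Phi=\Phi_{A_{1}}\otimes\cdots\otimes\Phi_{A_{N}}$, assembled from the swap gates in the proof of Lemma 2 (hence independent of $r$), such that $\Phi(\rho^{r})=|junk\rangle\langle junk|\otimes GHZ^{r}_{\theta}$ simultaneously for all $r$. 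Since the paper has already noted that the Lemma applies to mixed states, this step goes through even when $\rho^{r}$ is not pure.

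Next, I would transfer $\Phi$ across the shared state $\tau=\tau_{A_{1}R_{1}}\otimes\cdots\otimes\tau_{A_{N}R_{N}}$. Because $\tau$ factorises over the $N$ pairs, the construction can be performed independently for each $i$. Concretely, I would purify $\tau_{A_{i}R_{i}}$ if necessary, apply $\Phi_{A_{i}}$ to its $A_{i}$ half to obtain an auxiliary operator $\xi^{(i)}$ on $\mathcal{H}_{A_{i}^{\prime}}\otimes\mathcal{H}_{R_{i}}$ (after tracing out the junk and the residual $\mathcal{H}_{A_{i}}$), and then define $\Lambda_{R_{i}}:\mathcal{L}(\mathcal{H}_{R_{i}})\to\mathcal{L}(\mathcal{H}_{A_{i}^{\prime}})$ as the Choi-type map associated with $\xi^{(i)}$, normalised so as to be unital. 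To finish, I would check (7) by applying $\Phi$ to both sides of $p(r)\rho^{r}=\operatorname{Tr}_{R}[(I_{A}\otimes R^{r})\tau]$: the left-hand side becomes $p(r)|junk\rangle\langle junk|\otimes GHZ^{r}_{\theta}$, while the right-hand side, after discarding the junk sector, matches $\Lambda_{R_{1}}\otimes\cdots\otimes\Lambda_{R_{N}}(R^{r})$ up to a scalar. Using $\sum_{r}R^{r}=I$ together with the orthonormality of the tilted GHZ basis $\{GHZ^{r}_{\theta}\}_{r=0}^{2^{N}-1}$ pins all the normalisations uniformly and gives equation (7).

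The main obstacle will lie in the second step: arranging the definition of $\Lambda_{R_{i}}$ so that it is simultaneously completely positive \emph{and} unital (as demanded by (2)), rather than merely CP and trace preserving. The unitality requirement $\Lambda_{R_{i}}(I_{R_{i}})=I_{A_{i}^{\prime}}$ constrains the $A_{i}^{\prime}$-marginal of $\xi^{(i)}$ and must be secured by the correct scaling in the Choi construction; the completeness of both the POVM $\{R^{r}\}$ and the GHZ basis $\{GHZ^{r}_{\theta}\}$ provides precisely the consistency needed to fix that scaling uniformly across all outcomes $r$. Once this normalisation is pinned down, Theorem 1 follows from the already-established Lemma 2.
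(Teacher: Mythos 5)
Your proposal follows essentially the same route as the paper's Appendix A: condition on Roy's outcome, invoke Lemma 2 to obtain a single swap-gate isometry, trace out the junk to form local swap channels $\Gamma_{A_i}$ acting on each $\tau_{A_iR_i}$, and take $\Lambda_{R_i}$ to be the Choi--Jamio\l kowski map of the resulting operator $\sigma_{A_i'R_i}$ (the paper fixes your normalisation concern by attaching the Choi map to $2\sigma_{A_i'R_i}$ and verifying unitality via $\sum_r R^r = I$ together with the fact that the single-party marginals of the $GHZ^r_\theta$ sum to a multiple of the identity). The argument is correct and matches the paper's proof in all essential steps.
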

The detailed proof is shown in Appendix A. Here, we present a brief description. Let the $\tau_{A_{1}A_{2}\dots A_{N}}^{r}=|\psi^{r}\rangle\langle \psi^{r}|$ acting on $\otimes_{i=1}^{N}\mathcal{H}_{i}$ be the state shared by $A_{1},A_{2},\dots,A_{N}$ conditioned on outcome $r$. From the Lemma 2, there exists a single isometry such that $\Phi(|\psi^{r}\rangle)=|junk\rangle|GHZ_{\theta}^{r}\rangle$. By tracing out the subsystems $\mathcal{H}_{1},\dots,\mathcal{H}_{N}$, one can construct a single pair of swap channels $\Gamma_{A_{i}}:\mathcal{L}(\mathcal{H}_{A_{i}})\to\mathcal{L}(\mathcal{H}_{A_{i}^{\prime}}),i\in\{1,2,\dots,N\}$, such that
\begin{equation}\nonumber
(\otimes_{i=1}^{N}\Gamma_{A_{i}})(\tau_{A_{1}A_{2}\dots A_{N}}^{r})=|GHZ_{\theta}^{r}\rangle\langle GHZ_{\theta}^{r}|,
\end{equation}
for all $r$. With the help of Choi-Jamio\l kowski map \cite{renou2018self}, one can construct completely positive and unital maps which are associated with above swap channels, such that
\begin{equation}\nonumber
\begin{split}
(\otimes_{i=1}^{N}\Lambda_{R_{i}})(R_{R_{1}R_{2}\cdots R_{N}}^{r})=(\otimes_{i=1}^{N}\Gamma_{A_{i}})(\tau_{A_{1}A_{2}\dots A_{N}}^{r})=GHZ_{\theta}^{r}.
\end{split}
\end{equation}
The $2^{N}$ equations given by Eq.$~$(\ref{eq:theoremm}) imply that a real measurement $\mathcal{R}=\{R_{R_{1}\dots R_{N}}^{r}\}_{r=0}^{2^{N}-1}$ is capable of simulating ideal tilted GSM, $\{GHZ_{\theta}^{r}\}_{r=0}^{2^{N}-1}$, i.e., the Theorem 1 self-tests the tilted GSM. 
The method presents a unified form of the theorem for multipartite case without resorting to different Bell inequalities. Furthermore, one can also self-test multipartite GSM, if $\alpha=0, \theta=\pi/4$. Moreover, if $N=3$, one can recover the case of three-qubit GSM \cite{renou2018self}.

Remarkably, for any self-testing method of tilted GHZ-states, if the ideal measurements in the self-testing procedure are constructed by Pauli matrices, it can be adopted to self-test tilted GSM. Such a property can be a rule to construct the self-testing method for tilted GSM.

\section{Robust self-testing of the GSM}\label{sec:robust}
The ideal self-testing method is an excellent tool to device-independently certify quantum information tasks. However, due to the imperfection of quantum devices, the accurate correlations in the above theorem may not be satisfied. Hence, a robust version of self-testing is necessary from an experimental point of view. For convenience, we will study here a robust self-testing scheme of three-qubit GSM, where $N=3,\alpha=0, \theta=\pi/4$. The method for studying robustness of other cases is similar.

Before presenting the robustness of GSM, let us firstly study the robust self-testing of the GHZ state with semi-definite programs (SDP) method. One can rewrite $A_{1},A_{2},A_{3}$ as $A,B,C$ and let $A_{i,1}=A_{i},A_{i,2}=B_{i},A_{i,3}=C_{i},i\in\{0,1\}$. 
Let the state shared by $A,\ B$ and $C$ with outcome $r=0$ be $\tau_{ABC}^{0}=|\psi^{0}\rangle\langle\psi^{0}|$. In a general way, one can adopt the fidelity $F=\langle GHZ|\sigma_{A^{\prime}B^{\prime}C^{\prime}}^{0}|GHZ\rangle$ to capture the distance of the unknown state to the target state \cite{li2018self}, 
where $|GHZ\rangle=\frac{|000\rangle+|111\rangle}{\sqrt{2}}$ and $\sigma_{A^{\prime}B^{\prime}C^{\prime}}^{0}=\Gamma_{A}\otimes\Gamma_{B}\otimes\Gamma_{C}(\tau_{ABC}^{0})$. The maps $\Gamma_{f},f\in\{A,B,C\}$ are defined from Fig.$~$3 as 
$\Gamma_{f}(|\xi\rangle_{f} \langle\xi|)=Tr_{\mathcal{H}_{f}}(S_{X_{f},Z_{f}}|0\rangle\langle 0|\otimes |\xi\rangle_{f}\langle \xi|S_{X_{f},Z_{f}}^{\dag})$ with $f\in\{A,B,C\}$. Here, the assumption that $X,Z$ are anti-commutative in the definition of $\Gamma$ has been removed. The state $\sigma_{A^{\prime}B^{\prime}C^{\prime}}^{0}$ can be written as
\begin{equation}
\sigma_{A^{\prime}B^{\prime}C^{\prime}}^{0}=Tr_{ABC}(S_{ABC}|000\rangle_{A^{\prime}B^{\prime}C^{\prime}}\langle 000|\otimes\tau_{ABC}^{0}S_{ABC}^{\dag}).
\end{equation}
From the definition of fidelity, one has
\begin{equation}
\begin{split}
F&=\langle GHZ|\sigma_{A^{\prime}B^{\prime}C^{\prime}}^{0}|GHZ\rangle\\&=\frac{1}{128}Tr_{ABC}\{8(1+Z_{A})(1+Z_{B})(1+Z_{C})\tau_{ABC}^{0}\\& \quad+8(1-Z_{A})(1-Z_{B})(1-Z_{C})\tau_{ABC}^{0}\\&\quad+(\Pi_{f\in\{A,B,C\}}(1+Z_{f})X_{f}(1-Z_{f}))\tau_{ABC}^{0}\\&\quad+(\Pi_{f\in\{A,B,C\}}(1-Z_{f})X_{f}(1+Z_{f}))\tau_{ABC}^{0}\},
\end{split}
\end{equation}
where the fidelity can be expressed as a linear function of the expectation values. Suppose the channel suffers with white noise (weight $\epsilon$), one can transform the problem of robustness into the problem that finding a lower bound on the fidelity. It can be solved by SDP \cite{wu2014robust,li2018self,bancal2015physical,vandenberghe1996semidefinite}:
\begin{equation}
\begin{split}
&min \ \ F=\langle GHZ|\sigma_{A^{\prime}B^{\prime}C^{\prime}}^{0}|GHZ\rangle,\\&s.t. \ \ M\geq 0,\\&\qquad \langle\psi|P_{A_{0}}^{0}|\psi\rangle=\langle\psi|P_{B_{0}}^{0}|\psi\rangle=\frac{1}{2},\\&\qquad\langle\psi|P_{A_{0}}^{0}P_{B_{0}}^{0}|\psi\rangle=\frac{1-\epsilon}{2}+\frac{\epsilon}{4},\\&\qquad   \langle\psi|P_{A_{1}}^{a}|\psi\rangle=\frac{1}{2},\ \ for \ a\in\{0,1\}, \\&\qquad   \langle\psi|P_{A_{1}}^{a}(\alpha B_0+B_{0}C_{0}+B_{0}C_{1}\\&\qquad+(-1)^{a}(B_{1}C_{0}-B_{1}C_{1}))|\psi\rangle=\sqrt{2}(1-\epsilon), 
\label{eq:epsilon}
\end{split}
\end{equation}
where $M$ is a moment matrix defined by $M_{ij}=Tr(\tau_{ABC}^{0}D_{i}^{\dag}D_{j})$ with set $\{D_{1}=I,D_{2}=Z_{A},D_{3}=X_{A}\dots\}$ \cite{navascues2008convergent}. 
For an ideal case, the fidelity is 1 when error $\epsilon=0$. For other $\epsilon$ up to 0.1225, the relations between minimal fidelity and error are shown in Fig.$~$4. Thus, the Fig.$~$4 gives a lower bound of fidelity for different $\epsilon$. Without loss of generality, one can define the relation between minimal fidelity and $\epsilon$ as a function $G(\epsilon^{0})$, which will be used to study the robustness of GSM. Here, the $\epsilon$ has been rewritten as $\epsilon^{0}$.

\begin{figure}[ht]
\centering
\includegraphics[scale=0.3]{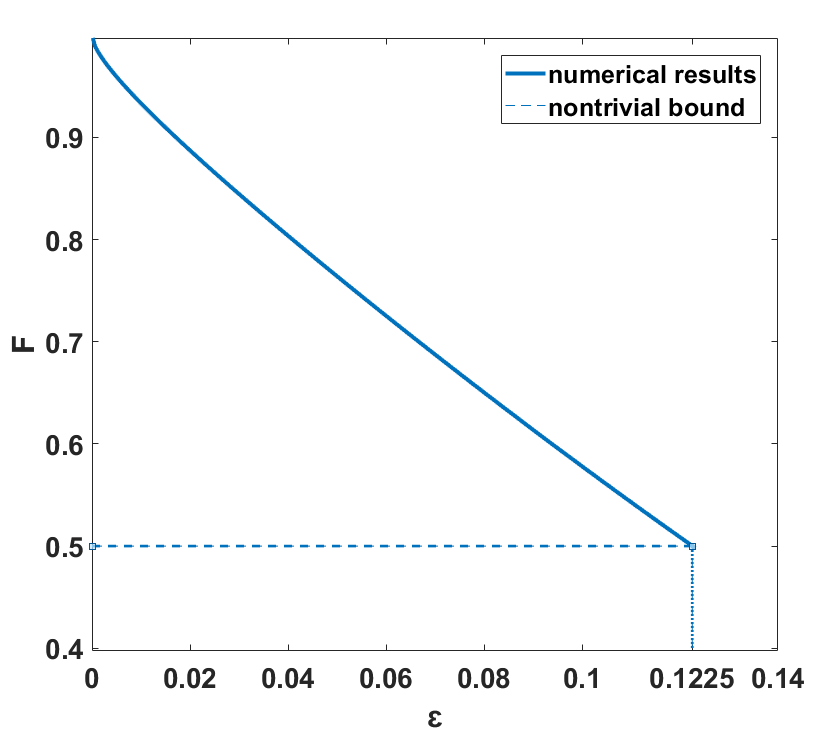}
\caption{The lower bound on the fidelity $F$ between GHZ state and unknown state $\sigma_{A^{\prime}B^{\prime}C^{\prime}}^{0}$ for different white noise $\epsilon$. When the fidelity is above the nontrivial bound of $0.5$ (i.e., $\epsilon \leq 12.25\%$), the unknown state is close to a GHZ state.}
\label{fig4}
\end{figure}

For defining quality of real measurement $\mathcal{R}$ as a simulation of ideal GSM $\mathcal{P}$, where $\mathcal{R}=\{R_{R_{1}R_{2}R_{3}}^{r}\}_{0}^{7}$ and $\mathcal{P}=\{GHZ^{r}\}_{0}^{7}$, we directly extend the definition in Ref.$~$\cite{renou2018self} to three parties as
\begin{equation}
\begin{split}
&\mathcal{Q}(\mathcal{R},\mathcal{P})\\&=\frac{1}{8}\times\mathop{max}\limits_{\Lambda_{R_{1}}\Lambda_{R_{2}}\Lambda_{R_{3}}}\sum\limits_{r=0}^{7}\langle (\Lambda_{R_{1}}\otimes\Lambda_{R_{2}}\otimes\Lambda_{R_{3}})(R_{R_{1}R_{2}R_{3}}^{r}),GHZ^{r}\rangle.
\label{eq:quality}
\end{split}
\end{equation}
Here, we omit the subscript of $GHZ_{\frac{\pi}{4}}^{r}$ as $GHZ^{r}$ and $\Lambda_{R_{1}},\ \Lambda_{R_{2}},\ \Lambda_{R_{3}}$ are unital CPTP maps with $\Lambda_{R_{1}}:\ \mathcal{L}(\mathcal{H}_{R_{1}})\to \mathcal{L}(\mathcal{H}_{A^{\prime}}),\ \Lambda_{R_{2}}:\ \mathcal{L}(\mathcal{H}_{R_{2}})\to \mathcal{L}(\mathcal{H}_{B^{\prime}}),\ \Lambda_{R_{3}}:\ \mathcal{L}(\mathcal{H}_{R_{3}})\to \mathcal{L}(\mathcal{H}_{C^{\prime}})$. The symbol $\langle\ ,\ \rangle$ for two matrices $L_1$ and $L_2$ implies 
\begin{equation}\nonumber
\langle L_1,L_2\rangle=Tr(L_{1}L_{2}^{\dag}).
\end{equation}
Now, the robust version of self-testing method is presented as following.
\begin{theorem}
Let $A,\ B,\ C$ share a pair of quantum state with Roy respectively as $\tau_{AR_{1}BR_{2}CR_{3}}=\tau_{AR_{1}}\otimes\tau_{BR_{2}}\otimes\tau_{CR_{3}}$ and let $\mathcal{R}=\{R_{R_{1}R_{2}R_{3}}^{r}\}_{r=0}^{7}$ be a 8-outcome measurement acting on $\mathcal{H}_{R_{1}}\otimes\mathcal{H}_{R_{2}}\otimes\mathcal{H}_{R_{3}}$. Let $p_{r}$ be the probability of Roy observing the outcome $r$. Define the function $G(\epsilon^{r})$ as the lower bound on the fidelity between $\Gamma_{A}\otimes\Gamma_{B}\otimes\Gamma_{C}(\tau_{ABC}^{r})$ and $GHZ^{r}$ under noise $\epsilon^{r}$. 
For $A,\ B$ and $C$, suppose there exist measurements, such that the observed correlations conditioned on outcomes $r$ satisfy the relations in the Lemma 2 with error $\epsilon^{r}$ and $G(\epsilon^{r})> 0.5$. Define $q=\Sigma_{r}p_{r}G(\epsilon^{r})$, then one has
\begin{equation}
\begin{split}
\mathcal{Q}(\mathcal{R},&\mathcal{P})\geq\frac{1}{2(1+2\sqrt{q(1-q)})^{2}}\cdot\\&\mathop{min}\limits_{u\in[0,2\sqrt{q(1-q)}]}(\frac{2q-1}{\sqrt{(1-u^{2})}}+\frac{1}{(1+u)}).
\end{split}
\end{equation} 
\end{theorem}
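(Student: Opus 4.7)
The strategy is to lift the state-level fidelity bounds provided by the SDP into a measurement-level bound for $\mathcal{Q}(\mathcal{R},\mathcal{P})$, by combining the Choi--Jamio\l kowski duality with an averaging argument over the outcomes $r$ of Roy's measurement. I would proceed in three stages.

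First, I would bound the fidelity on the state side outcome by outcome. For each $r\in\{0,1,\dots,7\}$, the unitary relabeling trick underpinning Lemma 2 lets one invoke the same SDP analysis that produced $G(\epsilon^{0})$ also for the other $r$, and hence lower-bound the fidelity between $\sigma^{r}_{A'B'C'}=\Gamma_{A}\otimes\Gamma_{B}\otimes\Gamma_{C}(\tau^{r}_{ABC})$ and the reference $|GHZ^{r}\rangle\langle GHZ^{r}|$ by $G(\epsilon^{r})$, using the hypothesized noisy correlations. Weighting by the probabilities $p_{r}$ that Roy observes outcome $r$ then yields
\[
\sum_{r} p_{r}\,\langle \sigma^{r}_{A'B'C'},\,GHZ^{r}\rangle \;\geq\; \sum_{r} p_{r}G(\epsilon^{r}) \;=\; q.
\]

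Second, I would use the Choi--Jamio\l kowski isomorphism --- the same device employed in the proof of Theorem 1 and in Ref.~\cite{renou2018self} --- to convert the local swap channels $\Gamma_{A_{i}}$ on Alice's side into the completely positive unital maps $\Lambda_{R_{i}}$ on Roy's side that enter the definition of $\mathcal{Q}$. Because each pair $\tau_{A_{i}R_{i}}$ is (close to) a Bell state, this translation is essentially tight, and one obtains a specific triple $(\Lambda_{R_{1}},\Lambda_{R_{2}},\Lambda_{R_{3}})$ for which the diagonal sum $\tfrac{1}{8}\sum_{r}\langle(\Lambda_{R_{1}}\otimes\Lambda_{R_{2}}\otimes\Lambda_{R_{3}})(R^{r}_{R_{1}R_{2}R_{3}}),\,GHZ^{r}\rangle$ inherits the weighted-fidelity bound above up to a normalization factor. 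Since $\mathcal{Q}(\mathcal{R},\mathcal{P})$ is defined as the supremum over all such triples, plugging in this specific choice already furnishes a lower bound on $\mathcal{Q}$ expressed purely in terms of $q$.

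Third and most delicately, I would extract the explicit closed form. The factor $(1+2\sqrt{q(1-q)})^{2}$ in the denominator, together with the auxiliary parameter $u\in[0,2\sqrt{q(1-q)}]$, should originate from two sources: the renormalization of the $\Lambda_{R_{i}}$ (since the swap channels are only approximately unital whenever $X_{f},Z_{f}$ fail to anti-commute exactly on the support of $|\psi^{r}\rangle$), and the control of the off-diagonal overlaps $\langle\Lambda(R^{r}),GHZ^{r'}\rangle$ with $r'\neq r$ via Fuchs--van de Graaf--type inequalities applied to the fidelity bound $\langle\sigma^{r},GHZ^{r}\rangle\geq G(\epsilon^{r})$. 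Parameterizing the worst-case size of these cross-overlaps by $u$ and optimizing (minimizing) over it then produces the expression $\frac{2q-1}{\sqrt{1-u^{2}}}+\frac{1}{1+u}$ appearing in the statement. I expect this third stage to be the principal obstacle, since getting the sharp coefficient $2\sqrt{q(1-q)}$ requires a careful application of the fidelity--trace-distance relation in tandem with the mutual orthogonality of the eight reference states $|GHZ^{r}\rangle$, and the renormalization must be handled without loosening the bound trivially.
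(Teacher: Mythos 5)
Your stages 1 and 2 match the paper's route: outcome-by-outcome fidelity bounds $\langle\sigma^{r}_{A'B'C'},GHZ^{r}\rangle\geq G(\epsilon^{r})$ lifted through the identity $p_{r}\langle\sigma^{r}_{A'B'C'},GHZ^{r}\rangle=\langle\sigma_{A'R_{1}}\otimes\sigma_{B'R_{2}}\otimes\sigma_{C'R_{3}},GHZ^{r}\otimes R^{r}_{R_{1}R_{2}R_{3}}\rangle$, followed by exhibiting one explicit triple of unital Choi maps and using that $\mathcal{Q}$ is a maximum over such triples. However, your stage 3 --- which you correctly flag as the crux --- attributes the closed-form bound to the wrong mechanism, and as proposed it would not produce the stated expression. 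The quality $\mathcal{Q}(\mathcal{R},\mathcal{P})$ involves only the diagonal overlaps $\langle\Lambda(R^{r}),GHZ^{r}\rangle$; no cross terms $\langle\Lambda(R^{r}),GHZ^{r'}\rangle$ with $r'\neq r$ appear anywhere, so there is nothing for a Fuchs--van de Graaf argument or the mutual orthogonality of the eight reference states to control. The parameter $u$ is not a cross-overlap size.

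What actually happens in the paper is this: the candidate Choi states $2\sigma_{A'R_{i}}$ fail to be unital exactly because the marginals $\sigma_{A'},\sigma_{B'},\sigma_{C'}$ need not be maximally mixed, so one corrects them --- $\lambda^{T}_{A'R_{1}}=(\sigma_{A'}^{-1/2}\otimes I)\sigma_{A'R_{1}}(\sigma_{A'}^{-1/2}\otimes I)$ for the first party and an additive correction $\frac{2}{1+\eta}\sigma+\sigma_{R}\otimes(I-\frac{2}{1+\eta}\sigma_{B'})$ for the other two. The quantity $2\sqrt{q(1-q)}$ arises by bounding the spectrum of these marginals: since the (averaged, suitably rotated) state has fidelity at least $q>1/2$ with $GHZ^{0}$, whose single-party marginal is maximally mixed, each eigenvalue of $\sigma_{A'}$ lies in $[\frac{1}{2}(1-2\sqrt{q(1-q)}),\frac{1}{2}(1+2\sqrt{q(1-q)})]$, i.e.\ the bias $\eta_{A'}$ satisfies $0\leq\eta_{A'}\leq 2\sqrt{q(1-q)}$. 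The final formula then follows from the operator inequality $\lambda^{T}_{A'R_{1}}\geq s(\eta_{A'})\sigma_{A'R_{1}}-t(\eta_{A'})\frac{I}{2}\otimes\sigma_{R_{1}}$ (imported from the supplement of Ref.~[renou2018self], with $s(x)=2/\sqrt{1-x^{2}}$), positivity of the remaining factors, and minimizing over $u=\eta_{A'}$ in the allowed interval. Without the marginal-spectrum bound and this operator inequality, your outline has no way to reach the coefficients $\frac{2q-1}{\sqrt{1-u^{2}}}+\frac{1}{1+u}$, so the gap is genuine.
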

The detailed proof is given in Appendix C. One can always let every $\epsilon^{r}$ be $max\{\epsilon^{r}\}_{r=0}^{7}$ and denote it as $\epsilon$. Then, one has $q=G(\epsilon)$, which can be obtained by numerical method of SDP problem. The relation between quality of unknown real measurement and the noise $\epsilon=max\{\epsilon^{r}\}_{r=0}^{7}$ is shown in Fig.$~$5. Thus, we have shown the robust self-testing scheme of the GSM with the noise tolerance up to\ $~0.28\%$. From the definition of quality $\mathcal{Q}(\mathcal{R},\mathcal{P})$ (\ref{eq:quality}), it should go through all possible unital CPTP maps $\Lambda_{R_{1}},\ \Lambda_{R_{2}},\ \Lambda_{R_{3}}$ and then choose the maximal value. However, our result is currently based on only one choice of these maps. Hence, if one optimizes this question and finds the maximum result, a better robustness can be expected. With the help of SDP method, one can straightforwardly obtain the robust version of our self-testing method for multipartite tilted GSM, similar to the robust self-testing method done for three-qubit GSM here. 
\begin{figure}[ht]
\centering
\includegraphics[scale=0.3]{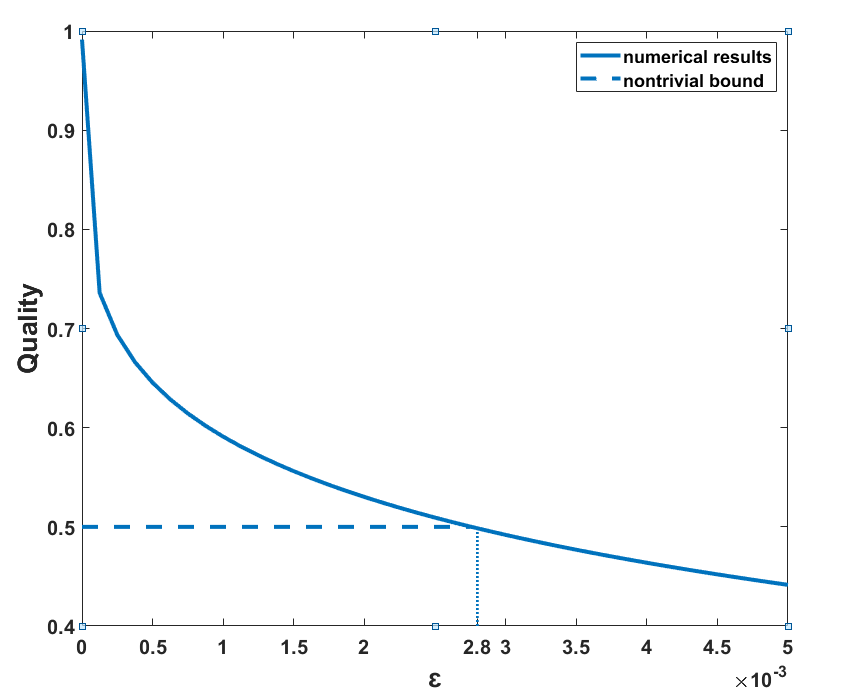}
\caption{The lower bound on the quality of the unknown real measurement is numerically estimated as a function about the weight of white noise $\epsilon$. When the weight of white noise $\epsilon\leq 0.28\%$ (i.e., the quality is above the nontrivial bound of 0.5), the presented procedure guarantees the unknown measurement is close to a three-qubit GHZ-state measurement.}
\label{fig5}
\end{figure}

\section{Conclusion}\label{sec:conclude}
In quantum network, it is extremely vital to certify multipartite entangled measurements. Here, we have presented the first self-testing method for the important class of general GHZ-state measurements. The procedure is operational for arbitrary number of parties from experimental point of views, and does not resort to the common method of verifying $N$-partite Bell inequalities. Meanwhile, the approach can recover the case of three-qubit GHZ-state measurement directly. In addition, we have provided robustness of the self-testing procedure with the help of semi-definite program. The noise tolerance is up to 0.28$\%$ when certifying a three-qubit GHZ-state measurement.

For future works, it is interesting to develop more robust method to open the possibility to estimate the robustness of arbitrary multipartite entangled measurements, and enable experiments about self-testing quantum networks. It is expected that our approach can also be extended to high dimensional case, as the self-testing method done for high dimensional entangled states \cite{vsupic2018self}.

\section{Acknowledgements}
We sincerely thank Xinhui Li for insightful discussions about the technology of semi-definite programs. This work has been supported by the National Natural Science Foundation of China (Grants No. 62031024, 11874346), the National Key R$\& $D Program of China (2019YFA0308700) and the Anhui Initiative in Quantum Information Technologies (AHY060200).

\appendix

\section*{Appendix A: Proof of the Theorem 1}\label{sec:appen1}
\setcounter{equation}{0}
\renewcommand\theequation{A\arabic{equation}}
As shown in the Theorem 1, if the observed correlations conditioned on outcome of Roy's measurement satisfy the relations in the Lemma 2, the measurement performed by Roy is a tilted GHZ-state measurement. Now, let us present the detailed proof of it.
\begin{proof}
Let $p_{r}$ be the probability of Roy observing the outcome $r$, and $\tau^{r}_{A_{1}A_{2}\dots A_{N}}=|\psi^{r}\rangle_{A_{1}A_{2}\dots A_{N}}\langle\psi^{r}|$ be the state shared between $A_{1}\dots A_{N}$ conditioned on outcome $r\in\{0,\dots,2^{N}-1\}$, i.e., $p_{r}\tau_{A_{1}\dots A_{N}}^{r}=Tr_{R_{1}R_{2}\dots R_{N}}[(I_{A_{1}A_{2}\dots A_{N}}\otimes R_{R_{1}R_{2}\dots R_{N}}^{r})(\otimes_{i=1}^{N}\tau_{A_{i}R_{i}})]$. One can always choose $p_{r}=\frac{1}{2^{N}}$. By the definition of swap gate in Fig.$~$3, one can construct swap channels as 
\begin{equation}\nonumber
\Gamma_{f}(|\xi\rangle_{f} \langle\xi|)=Tr_{\mathcal{H}_{f}}(S_{X_{f},Z_{f}}|0\rangle\langle 0|\otimes |\xi\rangle_{f}\langle \xi|S_{X_{f},Z_{f}}^{\dag}),
\end{equation}
where $f\in\{A_{1},A_{2},\dots A_{N}\}$. Define
\begin{equation}
\begin{split}
&\sigma_{A_{i}^{\prime}R_{i}}\equiv\Gamma_{A_{i}}(\tau_{A_{i}R_{i}}), i\in\{1,2,\dots,N\},\\
&\sigma_{A_{1}^{\prime}A_{2}^{\prime}\dots A_{N}^{\prime}}^{r}\equiv(\otimes_{i=1}^{N}\Gamma_{A_{i}})(\tau_{A_{1}A_{2}\dots A_{N}}^{r})\\&\quad\quad \ \ \ =(\frac{1}{p_{r}})Tr_{R_{1}R_{2}\dots R_{N}}(R_{R_{1}R_{2}\dots R_{N}}^{r}(\otimes_{i=1}^{N}\sigma_{A_{i}^{\prime}R_{i}}))\\&\quad\quad \ \ \ =(2^{N})Tr_{R_{1}R_{2}\dots R_{N}}(R_{R_{1}R_{2}\dots R_{N}}^{r}(\otimes_{i=1}^{N}\sigma_{A_{i}^{\prime}R_{i}})).
\end{split}
\label{eq:abc}
\end{equation}
Then, one has
\begin{equation}
\begin{split}
&\quad(\Gamma_{A_{1}}\otimes\Gamma_{A_{2}}\otimes\dots\otimes\Gamma_{A_{N}})(\tau_{A_{1}A_{2}\dots A_{N}}^{r})\\&=Tr_{A_{1}A_{2}\dots A_{N}}(S_{A_{1}A_{2}\dots A_{N}}|0\rangle^{\otimes N}_{A_{1}^{\prime}\dots A_{N}^{\prime}}\langle 0|^{\otimes N}\otimes\tau_{A_{1}A_{2}\dots A_{N}}^{r}S_{A_{1}A_{2}\dots A_{N}}^{\dag})\\&=Tr_{A_{1}A_{2}\dots A_{N}}(S_{A_{1}A_{2}\dots A_{N}}|0\rangle^{\otimes N}_{A_{1}^{\prime}\dots A_{N}^{\prime}}|\psi\rangle^{r}\langle 0|^{\otimes N}\langle\psi^{r}|S_{A_{1}A_{2}\dots A_{N}}^{\dag})\\&=Tr_{A_{1}A_{2}\dots A_{N}}(|junk\rangle_{A_{1}A_{2}\dots A_{N}}\langle junk|\otimes|GHZ_{\theta}^{r}\rangle\langle GHZ_{\theta}^{r}|)\\&=|GHZ_{\theta}^{r}\rangle\langle GHZ_{\theta}^{r}|.
\label{eq:gammaqubit}
\end{split}
\end{equation} 
The third equality is from the Lemma 2. From the definition of the state $\sigma_{A_{1}^{\prime}A_{2}^{\prime}\dots A_{N}^{\prime}}^{r}$, one has 
\begin{equation}\nonumber
\sigma_{A_{1}^{\prime}A_{2}^{\prime}\dots A_{N}^{\prime}}^{r}=GHZ_{\theta}^{r},
\end{equation}
for all $r\in\{0,1,\dots,2^{N}-1\}$. Let us firstly present the definition of Choi-Jamio\l kowski map \cite{renou2018self}. If $\rho_{AB}$ acts on $\mathcal{H}_{A} \otimes\mathcal{H}_{B}$, the Choi-Jamio\l kowski map ($\Lambda_{B}:\mathcal{H}_B\to\mathcal{H}_A$) associated to it is defined by $\Lambda_{B}(\sigma_B)=Tr_{B}[(I_A\otimes\sigma_{B}^{T})\rho_{AB}]$ for $\forall \sigma_{B}$. Here, $\rho_{AB}$ is the Choi state and can be unnormalized. Now, let $\Lambda_{R_{i}}:\ \mathcal{L}(\mathcal{H}_{R_{i}})\to \mathcal{L}(\mathcal{H}_{A_{i}^{\prime}})$, be respectively the Choi-Jamio\l kowski maps associated to the operators $2\sigma_{A_{i}^{\prime}R_{i}},i\in\{1,2,\dots,N\}$. By decomposing the operator $R_{R_{1}R_{2}\dots R_{N}}^{r}$ as $R_{R_{1}R_{2}\dots R_{N}}^{r}=\Sigma_{l}\bigotimes_{k}\omega_{k,l}^{r}$, where $\omega_{k,l}^{r}$ is the operator of $\mathcal{H}_{R_{k}}$, one has $\Lambda_{R_{1}}\otimes\Lambda_{R_{2}}\otimes\dots\otimes\Lambda_{R_{N}}(R_{R_{1}R_{2}\dots R_{N}}^{r})=(2^{N})Tr_{R_{1}R_{2}\dots R_{N}}(R_{R_{1}R_{2}\dots R_{N}}^{r}(\otimes_{i=1}^{N}\sigma_{A_{i}^{\prime}R_{i}}))=\sigma_{A_{1}^{\prime}A_{2}^{\prime}\dots A_{N}^{\prime}}^{r}=GHZ_{\theta}^{r}$. Moreover, we will prove that these Choi maps $\Lambda_{R_{i}},i\in\{1,2,\dots,N\}$, are unital maps. Let us first consider $\Lambda_{R_{1}}$, the other cases being similar. By the definition of Choi-Jamio\l kowski map, one has 
\begin{equation}\nonumber
\begin{split}
\Lambda_{R_{1}}(I_{R_{1}})&=Tr_{R_{1}}(2\sigma_{A_{1}^{\prime}R_{1}})\\&=2Tr_{R_{1}R_{2}\dots R_{N}A_{2}^{\prime}\dots A_{N}^{\prime}}(\otimes_{i=1}^{N}\sigma_{A_{i}^{\prime}R_{i}})\\&=2\Sigma_{r=0}^{2^{N}-1}Tr_{R_{1}R_{2}\dots R_{N}A_{2}^{\prime}\dots A_{N}^{\prime}}(R_{R_{1}R_{2}R_{3}}^{r}(\otimes_{i=1}^{N}\sigma_{A_{i}^{\prime}R_{i}}))\\&=\frac{1}{2^{N-1}}\Sigma_{r=0}^{2^{N}-1}Tr_{A_{2}^{\prime}\dots A_{N}^{\prime}}(\sigma_{A_{1}^{\prime}A_{2}^{\prime}\dots A_{N}^{\prime}}^{r})=I_{A_{1}^{\prime}},
\end{split}
\end{equation} 
where we have used the fact that $\Sigma_{r=0}^{2^{N}-1}R_{R_{1}R_{2}\dots R_{N}}^{r}=I$ and $\sigma_{A_{1}^{\prime}A_{2}^{\prime}\dots A_{N}^{\prime}}^{r}=GHZ_{\theta}^{r}$.
\end{proof}
Therefore, we have proven that the joint measurement performed by central node Roy is actually a tilted GHZ-state measurement under the conditions in the Lemma 2.

\section*{Appendix B: The detailed form of $\widetilde{W_{\bar{a}}^{\alpha}}^{r}$ in the Lemma 2}\label{sec:appen2}
\setcounter{equation}{0}
\renewcommand\theequation{B\arabic{equation}}
In the Lemma 2, a new form of self-testing statement has been presented. The notation ``${\widetilde{\quad}}^{r}$'' in the ${\widetilde{O}}^{r}$ means that local unitary transformations are performed on the observable $O$. Here, we will provide the details of $\widetilde{W_{\bar{a}}^{\alpha}}^{r}$, where $\bar{a}=a_{1}\cdots a_{N-2}$. For convenience, let $N=3$. Rewrite $A_{1},\ A_{2},\ A_{3}$ as $A,\ B,\ C$ and let $A_{i,1}=A_{i},\ A_{i,2}=B_{i},\ A_{i,3}=C_{i},\ i\in\{0,1\}$. Now, the $\bar{a}$ is $a_{1}$, and rewritten as $a\in\{0,1\}$. The $W_0^{\alpha}$ and $W_1^{\alpha}$ can been obtained from the Lemma 2 as $W_0^{\alpha}=\alpha B_{0}\otimes I+ B_{0}C_{0}+B_{0}C_{1}+B_{1}C_{0}-B_{1}C_{1}, W_1^{\alpha}=\alpha B_{0}\otimes I+  B_{0}C_{0}+B_{0}C_{1}-B_{1}C_{0}+B_{1}C_{1}$. Firstly, by adding superscript in the formula of $W_0^{\alpha}$ and $W_1^{\alpha}$, one has $W_{0}^{\prime\alpha}=\alpha B_{0}^{\prime}+W_{0}^{\prime}, W_{1}^{\prime\alpha}=\alpha B_{0}^{\prime}+W_{1}^{\prime}$, where $W_{0}^{\prime}=B_{0}^{\prime}C_{0}^{\prime}+B_{0}^{\prime}C_{1}^{\prime}+B_{1}^{\prime}C_{0}^{\prime}-B_{1}^{\prime}C_{1}^{\prime},\ W_{1}^{\prime}=B_{0}^{\prime}C_{0}^{\prime}+B_{0}^{\prime}C_{1}^{\prime}-B_{1}^{\prime}C_{0}^{\prime}+B_{1}^{\prime}C_{1}^{\prime}$. From the Lemma 1, one knows that $B_{0}^{\prime}=Z^{\prime},B_{1}^{\prime}=X^{\prime}$ and $C_{0}^{\prime}=\cos\mu Z^{\prime}+\sin\mu X^{\prime},C_{1}^{\prime}=\cos\mu Z^{\prime}-\sin\mu X^{\prime}$ with $\tan\mu=\sin 2\theta$. The local unitary transformation performed on $W_{a}^{\prime\alpha}$ is $U^{\prime r}=U_{A^{\prime}}^{\prime r}\otimes U^{\prime r}_{B^{\prime}C^{\prime}}$. As the $U^{\prime r}$ is the local unitary transformation between tilted GHZ states, one can always choose $U^{\prime r}_{B^{\prime}C^{\prime}}\in\{X^{\prime}\otimes X^{\prime},I^{\prime}\otimes X^{\prime},X^{\prime}\otimes I^{\prime},I^{\prime}\otimes I^{\prime}\}$. For $r=7$ case, one has $X^{\prime}Z^{\prime}\otimes X^{\prime}\otimes X^{\prime}|GHZ_{\theta}^{7}\rangle=|GHZ_{\theta}^{0}\rangle$, where $|GHZ_{\theta}^{7}\rangle=\sin\theta|000\rangle-\cos\theta|111\rangle$ and $|GHZ_{\theta}^{0}\rangle=\cos\theta|000\rangle+\sin\theta|111\rangle$. Here, the $U^{\prime 7}=X^{\prime}Z^{\prime}\otimes X^{\prime}\otimes X^{\prime}$. Thus, $\widetilde{W_{0}^{\prime\alpha}}^{7}=U^{\prime 7\dag}W_{0}^{\prime\alpha}U^{\prime 7}=-\alpha B_{0}^{\prime}+W_{0}^{\prime}$ and $\widetilde{W_{1}^{\prime\alpha}}^{7}=U^{\prime 7\dag}W_{1}^{\prime\alpha}U^{\prime 7}=-\alpha B_{0}^{\prime}+W_{1}^{\prime}$. After calculating $\widetilde{W_{a}^{\prime\alpha}}^{r}$ for all $r\in\{0,1,\dots,7\}$ and $a\in\{0,1\}$, the detailed formulas of $\widetilde{W_{a}^{\prime\alpha}}^{r}$ can be obtained. By replacing the symbols $I^{\prime},B_{i}^{\prime},C_{i}^{\prime},i\in\{0,1\}$ in $\widetilde{W_{a}^{\prime\alpha}}^{r}$ with $I,B_{i},C_{i},i\in\{0,1\}$, one can obtain the detailed form of $\widetilde{W_{a}^{\alpha}}^{r}$.

In short, the $\widetilde{W_{\bar{a}}^{\alpha}}^{r}$ is acquired by deleting the superscript prime of $\widetilde{W_{\bar{a}}^{\prime\alpha}}^{r}$. The $\widetilde{W_{\bar{a}}^{\prime\alpha}}^{r}$ is obtained by performing local unitary transformations on $W_{\bar{a}}^{\prime\alpha}$. The local unitary transformation depends on the transformation between states $|GHZ_{\theta}^{r}\rangle$ and $|GHZ_{\theta}^{0}\rangle$. Therefore, one can easily write the detailed form of $\widetilde{W_{\bar{a}}^{\alpha}}^{r}$ in the Lemma 2.

\section*{Appendix C: Proof of the Theorem 2}\label{appen3}
\setcounter{equation}{0}
\renewcommand\theequation{C\arabic{equation}}
In this section, we give a proof of the Theorem 2 that shows the robust self-testing of three-qubit GHZ-state measurement. If the observed correlations can not perfectly satisfy the conditions in the Lemma 2, one can not adopt the ideal self-testing method presented in the Theorem 1 directly. We should bound the quality of the unknown measurement under the certain white noise, i.e., study how close the unknown measurement performed by Roy to ideal three-qubit GHZ-state measurement. 
Before presenting proof of the Theorem 2, we firstly generalize the result of semi-definite program in main text as following lemma.
\begin{lemma}
Let $A_{0},\ A_{1},\ B_{0},\ B_{1},\ C_{0},\ C_{1}$, be the pairs of observables for the three parties. If the correlations in the Lemma 2 with error $\epsilon^{r}$ ($\theta=\pi/4,\alpha=0$) satisfy the following relations: 
\begin{gather}
\langle\psi|\widetilde{P_{A_{0}}^{0}}^{r}|\psi\rangle=\langle\psi|\widetilde{P_{B_{0}}^{0}}^{r}|\psi\rangle=\frac{1}{2},\\ \langle\psi|\widetilde{P_{A_{0}}^{0}}^{r}\widetilde{P_{B_{0}}^{0}}^{r}|\psi\rangle=\frac{(1-\epsilon^{r})}{2}+\frac{\epsilon^{r}}{4}, \\ \langle\psi|\widetilde{P_{A_{1}}^{a}}^{r}|\psi\rangle=\frac{1}{2},\ \ for \ a\in\{0,1\},  \\ \langle\psi|\widetilde{P_{A_{1}}^{a}}^{r}\widetilde{W_a^{\alpha}}^{r}|\psi\rangle=\sqrt{2}(1-\epsilon^{r}),\ \ \ a\in\{0,1\},
\end{gather}
then there exist fixed CPTP maps $\Gamma_{A},\Gamma_{B},\Gamma_{C}$ as shown in Appendix A, such that 
\begin{equation}\nonumber
F((\Gamma_{A}\otimes\Gamma_{B}\otimes\Gamma_{C})(\tau_{ABC}^{r}),GHZ^{r}_{A^{\prime}B^{\prime}C^{\prime}})\geq G(\epsilon^{r}),
\end{equation}
for all $r\in\{0,1,\dots,7\}$. The function $G(x)$ is defined in main text as a function about lower bound of fidelity and white noise $\epsilon^{r}$. It is a numerical solution from SDP.
\end{lemma}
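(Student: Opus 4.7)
The plan is to reduce the bound for an arbitrary outcome $r$ to the $r=0$ SDP bound that was already established in Section~\ref{sec:robust}. The key structural fact is that $U^{r}$ is a product of local Pauli factors on the three parties, $U^r=U^r_{A}\otimes U^r_{B}\otimes U^r_{C}$ with each $U^r_f\in\{I,X,Z,XZ\}$, and that $U^{\prime r}$ has the analogous form on the qubit ancillas with $U^{\prime r\dag}|GHZ^0\rangle=|GHZ^r\rangle$. These are exactly the local unitaries already introduced before Lemma~2 and tabulated in Appendix~B.

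\textbf{Step 1: rewrite the hypothesis on a rotated state.} For each fixed $r$ I would introduce $|\tilde\psi^r\rangle:=U^r|\psi^r\rangle$ and use the trivial identity $\langle\psi^r|U^{r\dag}OU^r|\psi^r\rangle=\langle\tilde\psi^r|O|\tilde\psi^r\rangle$ term by term. Since $\widetilde{O}^r=U^{r\dag}OU^r$ by definition, the four tilded relations in the lemma statement translate into exactly the untilded $r=0$ constraints appearing in Eq.~(\ref{eq:epsilon}) of the main text, with the single error parameter $\epsilon$ there replaced by $\epsilon^r$. The observables $A_0,\ldots,C_1$ themselves (and hence the swap operators $X_f,Z_f$) are not touched by the rotation.

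\textbf{Step 2: apply the $r=0$ SDP bound to $|\tilde\psi^r\rangle$.} Because the SDP only depends on the moment data and not on which state/operator realization produces it, and the moments of $(A_i,B_i,C_i)$ on $|\tilde\psi^r\rangle$ satisfy the $r=0$ constraints with error $\epsilon^r$, I obtain
\[
F\bigl((\Gamma_A\otimes\Gamma_B\otimes\Gamma_C)(|\tilde\psi^r\rangle\langle\tilde\psi^r|),\,GHZ^0_{A'B'C'}\bigr)\geq G(\epsilon^r),
\]
where the three swap channels $\Gamma_f$ are the ones built from the pairs $(X_f,Z_f)$ and are the same for every $r$.

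\textbf{Step 3: push $U^r$ through the swap channels.} I would then use the $\Gamma$-level version of the intertwining identity that was the engine of the proof of Lemma~2, namely $S_{X_f,Z_f}\cdot X\cdot|0\rangle|\xi\rangle=X'\cdot S_{X_f,Z_f}\cdot|0\rangle|\xi\rangle$ and the analogue for $Z$. Through the definition $\Gamma_f(\rho)=\mathrm{Tr}_{\mathcal{H}_f}(S_{X_f,Z_f}|0\rangle\langle 0|\otimes\rho\,S_{X_f,Z_f}^{\dag})$ this becomes $\Gamma_f(U_f\rho U_f^{\dag})=U'_f\,\Gamma_f(\rho)\,U_f^{\prime\,\dag}$ for every $U_f\in\{I,X_f,Z_f,X_fZ_f\}$. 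Tensoring over the three parties gives
\[
(\Gamma_A\otimes\Gamma_B\otimes\Gamma_C)(U^r\tau^r_{ABC}U^{r\dag})=U^{\prime r}\,(\Gamma_A\otimes\Gamma_B\otimes\Gamma_C)(\tau^r_{ABC})\,U^{\prime r\dag}.
\]
Substituting into the inequality of Step~2, invoking unitary invariance of the fidelity, and using $U^{\prime r\dag}|GHZ^0\rangle=|GHZ^r\rangle$ to identify the target state yields the claimed bound.

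The main obstacle I expect is the bookkeeping in Step~1, which must be carried out uniformly in $r$: for each of the eight outcomes I have to verify that the translation of the tilded relations through $U^r$ produces precisely the four $r=0$ SDP constraints, not a weaker subset. This requires the explicit $U^r$ table implicit in Appendix~B, plus the observation that every local factor $U^r_f$ lies in $\{I,X,Z,XZ\}$, which is exactly what is needed so that the intertwining identity of Step~3 applies without any modification. Once this case-by-case check is done, the remainder is a purely formal manipulation using only ingredients already proved in the paper.
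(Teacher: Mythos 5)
Your high-level strategy (reduce each outcome $r$ to the $r=0$ SDP via the local Pauli relabelings $U^{r}$) is the same one-line reduction the paper invokes, but the mechanism you use to implement it fails precisely because Lemma~3 lives in the robust setting. Both your Step~1 and your Step~3 treat the tilde as genuine conjugation by $U^{r}=U^{r}_{A}\otimes U^{r}_{B}\otimes U^{r}_{C}$ built from the \emph{physical} operators $X_{f},Z_{f}$. That is not how the paper defines it: $\widetilde{O}^{r}$ is obtained by conjugating the ideal qubit operator $O^{\prime}$ by the genuine Pauli $U^{\prime r}$, simplifying with the qubit Pauli algebra, and only then dropping the primes --- i.e.\ it is the formal sign-flip substitution $Z_{f}\mapsto\pm Z_{f}$, $X_{f}\mapsto\pm X_{f}$. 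The identity $\widetilde{O}^{r}=U^{r\dagger}OU^{r}$ for the physical operators, and likewise your intertwining relation $\Gamma_{f}(X_{f}\rho X_{f}^{\dagger})=X^{\prime}\Gamma_{f}(\rho)X^{\prime\dagger}$, both require $\{X_{f},Z_{f}\}=0$. Concretely, with the swap gate of Fig.~3 one finds $\langle 0|\Gamma_{f}(X\rho X)|0\rangle=\mathrm{Tr}[\frac{1}{2}(I+XZX)\rho]$ whereas $\langle 1|\Gamma_{f}(\rho)|1\rangle=\mathrm{Tr}[\frac{1}{2}(I-Z)\rho]$, and these agree only if $XZX=-Z$. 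But Section~4 explicitly removes the anticommutation assumption from the definition of $\Gamma_{f}$ (under noise $\epsilon^{r}$ it holds only approximately), so Steps~1 and~3 as written are not valid, and repairing them with an approximate-anticommutation error term would degrade the bound.

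The fix keeps your skeleton but argues at the level of the SDP rather than of states and channels. Define effective dichotomic observables $\hat{A}_{i}=\widetilde{A_{i}}^{r}$, etc.; these are $\pm$ the originals, so $\hat{A}_{i}^{2}=I$ still holds, every word in the hatted operators is $\pm$ the corresponding original word, and the hatted moment matrix is $DMD$ for a diagonal sign matrix $D$, so feasibility for the $r=0$ SDP with error $\epsilon^{r}$ is exactly the hypothesis of Lemma~3. The only nontrivial point --- the one your Step~3 was meant to supply --- is that the objective also matches: one must check that $\langle GHZ^{r}|(\Gamma_{A}\otimes\Gamma_{B}\otimes\Gamma_{C})(\tau^{r}_{ABC})|GHZ^{r}\rangle$, expanded by the same direct computation the paper performs for $r=0$, equals the $r=0$ fidelity polynomial evaluated on the hatted operators. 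This is a short verification using only $X_{f}^{2}=Z_{f}^{2}=I$ (e.g.\ flipping $Z_{B}\to-Z_{B}$ turns $(1+Z_{B})$ into $(1-Z_{B})$ and $(1+Z_{B})X_{B}(1-Z_{B})$ into $(1-Z_{B})X_{B}(1+Z_{B})$, exactly reproducing the pullback of $|GHZ^{r}\rangle\langle GHZ^{r}|$ through the fixed channels), and no intertwining of unitaries through $\Gamma_{f}$ is needed. With that identity the $r$-SDP is literally the $r=0$ SDP and the bound $G(\epsilon^{r})$ follows with the fixed maps of Appendix~A.
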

\begin{proof}
For $r=0$, we have given the detailed process of SDP to derive this result in Sec. \ref{sec:robust}. The CPTP maps are fixed for all $r\in\{0,1,\dots,7\}$. For different $r$, the observables in above correlations are all equivalent to the $r=0$ case, up to local unitary transformations. Thus, the lower bound of fidelity for different $r$ have the same form, i.e., they have a same function $G(x)$.
\end{proof}
Now, we start to prove the Theorem 2 that finding the lower bound on the quality of the unknown real measurement $\{R_{R_{1}R_{2}R_{3}}^{r}\}_{r=0}^{7}$. As $GHZ^{r}_{A^{\prime}B^{\prime}C^{\prime}}$ are pure states and from Eq.$~$(\ref{eq:abc}), one has
\begin{equation}\nonumber
\begin{split}
&\quad p_{r}F((\Gamma_{A}\otimes\Gamma_{B}\otimes\Gamma_{C})(\tau_{ABC}^{r}),GHZ^{r}_{A^{\prime}B^{\prime}C^{\prime}})\\&=p_{r}\langle (\Gamma_{A}\otimes\Gamma_{B}\otimes\Gamma_{C})(\tau_{ABC}^{r}),GHZ^{r}_{A^{\prime}B^{\prime}C^{\prime}}\rangle\\&=\langle \sigma_{A^{\prime}R_{1}}\otimes\sigma_{B^{\prime}R_{2}}\otimes\sigma_{C^{\prime}R_{3}},GHZ^{r}_{A^{\prime}B^{\prime}C^{\prime}}\otimes R_{R_{1}R_{2}R_{3}}^{r}\rangle.
\end{split}
\end{equation}
From the Lemma 3, there is
\begin{equation}
\langle \sigma_{A^{\prime}R_{1}}\otimes\sigma_{B^{\prime}R_{2}}\otimes\sigma_{C^{\prime}R_{3}},GHZ^{r}_{A^{\prime}B^{\prime}C^{\prime}}\otimes R_{R_{1}R_{2}R_{3}}^{r}\rangle \geq p_{r}G(\epsilon^{r}).
\label{eq:pr}
\end{equation}
To derive the main result, one should construct unital CPTP maps $\Lambda_{R_{1}}:\ \mathcal{L}(\mathcal{H}_{R_{1}})\to \mathcal{L}(\mathcal{H}_{A^{\prime}})$, $\Lambda_{R_{2}}:\ \mathcal{L}(\mathcal{H}_{R_{2}})\to \mathcal{L}(\mathcal{H}_{B^{\prime}})$ and $\Lambda_{R_{3}}:\ \mathcal{L}(\mathcal{H}_{R_{3}})\to \mathcal{L}(\mathcal{H}_{C^{\prime}})$, and then find the lower bound on $\langle \Lambda_{R_{1}}\otimes\Lambda_{R_{2}}\otimes\Lambda_{R_{3}}(R_{R_{1}R_{2}R_{3}}^{r}),GHZ_{A^{\prime}B^{\prime}C^{\prime}}^{r}\rangle$. Let $\lambda_{A^{\prime}R_{1}},\lambda_{B^{\prime}R_{2}}$ and $\lambda_{C^{\prime}R_{3}}$ be the Choi states of the maps $\Lambda_{R_{1}},\Lambda_{R_{2}}$ and $\Lambda_{R_{3}}$. One has
\begin{equation}
\begin{split}
&\quad\langle \Lambda_{R_{1}}\otimes\Lambda_{R_{2}}\otimes\Lambda_{R_{3}}(R_{R_{1}R_{2}R_{3}}^{r}),GHZ_{A^{\prime}B^{\prime}C^{\prime}}^{r}\rangle\\&=\langle Tr_{R_{1}R_{2}R_{3}}[(\lambda_{A^{\prime}R_{1}}\otimes\lambda_{B^{\prime}R_{2}}\otimes\lambda_{C^{\prime}R_{3}})(I_{A^{\prime}B^{\prime}C^{\prime}}\otimes(R_{R_{1}R_{2}R_{3}}^{r})^{T})]\\&\qquad\qquad,GHZ_{A^{\prime}B^{\prime}C^{\prime}}^{r}\rangle\\&=\langle \lambda_{A^{\prime}R_{1}}\otimes\lambda_{B^{\prime}R_{2}}\otimes\lambda_{C^{\prime}R_{3}},GHZ_{A^{\prime}B^{\prime}C^{\prime}}^{r}\otimes (R_{R_{1}R_{2}R_{3}}^{r})^{T}\rangle\\&=\langle \lambda_{A^{\prime}R_{1}}^{T}\otimes\lambda_{B^{\prime}R_{2}}^{T}\otimes\lambda_{C^{\prime}R_{3}}^{T},GHZ_{A^{\prime}B^{\prime}C^{\prime}}^{r}\otimes R_{R_{1}R_{2}R_{3}}^{r}\rangle.
\end{split}
\end{equation}
To utilize the relation in Eq.$~$(\ref{eq:pr}) into above equation, the Choi states should be constructed by $\sigma_{A^{\prime}R_{1}},\sigma_{B^{\prime}R_{2}},\sigma_{C^{\prime}R_{3}}$, respectively. One can bound the marginals $\sigma_{A^{\prime}},\sigma_{B^{\prime}}$ and $\sigma_{C^{\prime}}$ to guarantee the marginals of the constructed Choi states are proportional to $I$. From the Eq.$~$(\ref{eq:abc}), we have
\begin{equation}\nonumber
\begin{split}
F((\Gamma_{A}\otimes\Gamma_{B}\otimes\Gamma_{C})(\tau_{ABC}^{r}),GHZ^{r}_{A^{\prime}B^{\prime}C^{\prime}})&=F(\sigma_{A^{\prime}B^{\prime}C^{\prime}}^{r},GHZ^{r}_{A^{\prime}B^{\prime}C^{\prime}})\\&=\langle \sigma_{A^{\prime}B^{\prime}C^{\prime}}^{r},GHZ^{r}_{A^{\prime}B^{\prime}C^{\prime}}\rangle\\&\geq G(\epsilon^{r}). 
\end{split}
\end{equation}
Here, we adopt the definition in main text about notation $\widetilde{\{..\}}^{r}$ and define 
\begin{equation}\nonumber
\begin{split}
\sigma_{A^{\prime}B^{\prime}C^{\prime}}^{\prime}&=\sum\limits_{r=0}^{7}p_{r}(\widetilde{\sigma_{A^{\prime}B^{\prime}C^{\prime}}^{r}}^{r})^{\dag}\\&=\sum\limits_{r=0}^{7}p_{r}(U_{A^{\prime}}^{\prime r}\otimes U_{B^{\prime}}^{\prime r}\otimes U_{C^{\prime}}^{\prime r})\sigma_{A^{\prime}B^{\prime}C^{\prime}}^{r}(U_{A^{\prime}}^{\prime r}\otimes U_{B^{\prime}}^{\prime r}\otimes U_{C^{\prime}}^{\prime r})^{\dag}.
\end{split}
\end{equation}
By calculation, one has
\begin{equation}
\begin{split}
F(\sigma_{A^{\prime}B^{\prime}C^{\prime}}^{\prime},GHZ^{0}_{A^{\prime}B^{\prime}C^{\prime}})&=\langle \sigma_{A^{\prime}B^{\prime}C^{\prime}}^{\prime},GHZ^{0}_{A^{\prime}B^{\prime}C^{\prime}}\rangle\\&=\sum\limits_{r=0}^{7}p_{r}\langle \sigma_{A^{\prime}B^{\prime}C^{\prime}}^{r},GHZ^{r}_{A^{\prime}B^{\prime}C^{\prime}}\rangle\\&\geq \sum\limits_{r=0}^{7}p_{r}G(\epsilon^{r})=q.
\label{eq:fidelity}
\end{split}
\end{equation}
Furthermore, the spectrum of $\sigma_{A^{\prime}}$ is the same as $\sigma_{A^{\prime}}^{\prime}$ because of 
\begin{equation}\nonumber
\begin{split}
\sigma_{A^{\prime}}&=Tr_{R_{1}}\sigma_{A^{\prime}R_{1}}=Tr_{B^{\prime}C^{\prime}R_{1}R_{2}R_{3}}(\sigma_{A^{\prime}R_{1}}\otimes\sigma_{B^{\prime}R_{2}}\otimes\sigma_{C^{\prime}R_{3}})\\&=\Sigma_{r}Tr_{B^{\prime}C^{\prime}R_{1}R_{2}R_{3}}(R_{R_{1}R_{2}R_{3}}^{r}(\sigma_{A^{\prime}R_{1}}\otimes\sigma_{B^{\prime}R_{2}}\otimes\sigma_{C^{\prime}R_{3}}))\\&=\Sigma_{r}p_{r}Tr_{B^{\prime}C^{\prime}}\sigma_{A^{\prime}B^{\prime}C^{\prime}}^{r}=\Sigma_{r}p_{r}\sigma_{A^{\prime}}^{r}=\sigma_{A^{\prime}}^{\prime},
\end{split}
\end{equation}
where we use that $\Sigma_{r=0}^{2^{N}-1}R_{R_{1}R_{2}R_{3}}^{r}=I.$ 
Next, we will bound the spectrum of $\sigma_{A^{\prime}}^{\prime}$. One can always find a pure state $\sigma_{A^{\prime}B^{\prime}C^{\prime}}^{\prime}$ to achieve the upper and lower bounds. Without loss of generality, let $\sigma_{A^{\prime}B^{\prime}C^{\prime}}^{\prime}=\alpha|000\rangle+\beta|111\rangle$. By inequality Eq.$~$(\ref{eq:fidelity}), $0.5< q\leq 1$ and $\alpha^{2}+\beta^{2}=1$, one has $\frac{1-2\sqrt{q(1-q)}}{2}\leq\alpha^{2}\leq \frac{1+2\sqrt{q(1-q)}}{2}$. Thus, $spectrum(\sigma_{A^{\prime}})=spectrum(\sigma_{A^{\prime}}^{\prime})\in[\frac{1-2\sqrt{q(1-q)}}{2}, \frac{1+2\sqrt{q(1-q)}}{2}]$. One can write the spectrum of $\sigma_{A^{\prime}}$ as  
\begin{equation}\nonumber
spectrum(\sigma_{A^{\prime}})=\{\frac{1-\eta_{A^{\prime}}}{2},\frac{1+\eta_{A^{\prime}}}{2}\},
\end{equation}
where $0\leq\eta_{A^{\prime}}\leq 2\sqrt{q(1-q)}<1$. The same bound on $\eta_{B^{\prime}}$ and $\eta_{C^{\prime}}$ will be obtained in a similar way for 
\begin{equation}\nonumber
spectrum(\sigma_{B^{\prime}})=\{\frac{1-\eta_{B^{\prime}}}{2},\frac{1+\eta_{B^{\prime}}}{2}\}
\end{equation}
and
\begin{equation}\nonumber
spectrum(\sigma_{C^{\prime}})=\{\frac{1-\eta_{C^{\prime}}}{2},\frac{1+\eta_{C^{\prime}}}{2}\}.
\end{equation}

Now, the detailed form of Choi states are:
\begin{equation}
\begin{split}
&\lambda_{A^{\prime}R_{1}}^{T}=(\sigma_{A^{\prime}}^{-1/2}\otimes I)\sigma_{A^{\prime}R_{1}}(\sigma_{A^{\prime}}^{-1/2}\otimes I),\\&\lambda_{B^{\prime}R_{2}}^{T}=\frac{2}{1+\eta_{B^{\prime}}}\sigma_{B^{\prime}R_{2}}+\sigma_{R_{2}}\otimes(I-\frac{2}{1+\eta_{B^{\prime}}}\sigma_{B^{\prime}}),\\&\lambda_{C^{\prime}R_{3}}^{T}=\frac{2}{1+\eta_{C^{\prime}}}\sigma_{C^{\prime}R_{3}}+\sigma_{R_{3}}\otimes(I-\frac{2}{1+\eta_{C^{\prime}}}\sigma_{C^{\prime}}),
\end{split}
\end{equation}
where $\sigma_{A^{\prime}}=Tr_{R_{1}}\sigma_{A^{\prime}R_{1}}, \sigma_{B^{\prime}}=Tr_{R_{2}}\sigma_{B^{\prime}R_{2}},\sigma_{C^{\prime}}=Tr_{R_{3}}\sigma_{C^{\prime}R_{3}}$ and $\sigma_{R_{2}}=Tr_{B^{\prime}}\sigma_{B^{\prime}R_{2}},\sigma_{R_{3}}=Tr_{C^{\prime}}\sigma_{C^{\prime}R_{3}}$. 
As the $spectrum(\sigma_{B^{\prime}})=\{\frac{1-\eta_{B^{\prime}}}{2},\frac{1+\eta_{B^{\prime}}}{2}\}$ and $spectrum(\sigma_{C^{\prime}})=\{\frac{1-\eta_{C^{\prime}}}{2},\frac{1+\eta_{C^{\prime}}}{2}\}$ are bounded by $0\leq\eta_{B^{\prime}}\leq 2\sqrt{q(1-q)} , \ \ 0\leq\eta_{C^{\prime}}\leq 2\sqrt{q(1-q)}$, the $\sigma_{R_{3}}\otimes(I-\frac{2}{1+\eta_{C^{\prime}}}\sigma_{C^{\prime}}),\sigma_{R_{2}}\otimes(I-\frac{2}{1+\eta_{B^{\prime}}}\sigma_{B^{\prime}})$ are positive semidefinite. 
Thus, one has
\begin{equation}\nonumber
\lambda_{A^{\prime}R_{1}}^{T}\otimes\lambda_{B^{\prime}R_{2}}^{T}\otimes\lambda_{C^{\prime}R_{3}}^{T}\geq \lambda_{A^{\prime}R_{1}}^{T}\otimes\frac{2}{1+\eta_{B^{\prime}}}\sigma_{B^{\prime}R_{2}}\otimes \frac{2}{1+\eta_{C^{\prime}}}\sigma_{C^{\prime}R_{3}}.
\end{equation}
From the Lemma 3 in the supplement material for Ref.$~$\cite{renou2018self}, the inequality 
\begin{equation}
\begin{split}
\lambda_{A^{\prime}R_{1}}^{T}\geq s(\eta_{A^{\prime}})\sigma_{A^{\prime}R_{1}}-t(\eta_{A^{\prime}})\frac{I}{2}\otimes\sigma_{R_{1}}
\label{eq:choidayu}
\end{split}
\end{equation}
holds, where $s(x)=\frac{2}{\sqrt{1-x^{2}}},t(x)=\frac{4}{\sqrt{1-x^{2}}}-\frac{4}{1+x}$ and $\sigma_{R_{1}}=Tr_{A^{\prime}}\sigma_{A^{\prime}R_{1}}$. Therefore, one has
\begin{equation}\nonumber
\begin{split}
\lambda_{A^{\prime}R_{1}}^{T}
&\otimes\lambda_{B^{\prime}R_{2}}^{T}\otimes\lambda_{C^{\prime}R_{3}}^{T}\geq (s(\eta_{A^{\prime}})\sigma_{A^{\prime}R_{1}}-t(\eta_{A^{\prime}})\frac{I}{2}\otimes\sigma_{R_{1}})\\&\otimes \frac{2}{1+\eta_{B^{\prime}}}\sigma_{B^{\prime}R_{2}} \otimes \frac{2}{1+\eta_{C^{\prime}}}\sigma_{C^{\prime}R_{3}},
\end{split}
\end{equation}
where the inequality is from Eq.$~$(\ref{eq:choidayu}) and positive semidefinite matrices $\frac{2}{1+\eta_{B^{\prime}}}\sigma_{B^{\prime}R_{2}} , \ \ \frac{2}{1+\eta_{C^{\prime}}}\sigma_{C^{\prime}R_{3}}$. As 
\begin{equation}\nonumber
\begin{split}
&\langle \sigma_{R_{1}}\otimes\sigma_{R_{2}}\otimes \sigma_{R_{3}},R_{R_{1}R_{2}R_{3}}^{r}\rangle\\&=Tr_{R_{1}R_{2}R_{3}}((\sigma_{R_{1}}\otimes\sigma_{R_{2}}\otimes \sigma_{R_{3}})\cdot R_{R_{1}R_{2}R_{3}}^{r})\\&=Tr_{A^{\prime}B^{\prime}C^{\prime}R_{1}R_{2}R_{3}}((\sigma_{A^{\prime}R_{1}}\otimes\sigma_{B^{\prime}R_{2}}\otimes \sigma_{C^{\prime}R_{3}})\cdot R_{R_{1}R_{2}R_{3}}^{r})\\&=p_{r}Tr_{A^{\prime}B^{\prime}C^{\prime}}\sigma_{A^{\prime}B^{\prime}C^{\prime}}^{r}=p_{r},
\end{split}
\end{equation}
one has $\quad\langle I\otimes\sigma_{R_{1}}\otimes I\otimes\sigma_{R_{2}}\otimes \sigma_{C^{\prime}R_{3}},GHZ_{A^{\prime}B^{\prime}C^{\prime}}^{r}\otimes R_{R_{1}R_{2}R_{3}}^{r}\rangle=\frac{1}{2}\langle \sigma_{R_{1}}\otimes\sigma_{R_{2}}\otimes \sigma_{R_{3}},R_{R_{1}R_{2}R_{3}}^{r}\rangle=\frac{p_{r}}{2}.$ Then, one arrives at
\begin{equation}\nonumber
\begin{split}
&\quad\langle \lambda_{A^{\prime}R_{1}}^{T}\otimes\lambda_{B^{\prime}R_{2}}^{T}\otimes\lambda_{C^{\prime}R_{3}}^{T},GHZ_{A^{\prime}B^{\prime}C^{\prime}}^{r}\otimes R_{R_{1}R_{2}R_{3}}^{r}\rangle\\&\geq \frac{4s(\eta_{A^{\prime}})p_{r}G(\epsilon^{r})-t(\eta_{A^{\prime}})p_{r}}{(1+\eta_{B^{\prime}})(1+\eta_{C^{\prime}})}.
\end{split}
\end{equation}
The inequality is derived from the fact that the fidelity can only increase after tracing out the subsystem. Now, we can obtain 
\begin{equation}\nonumber
\begin{split}
&\mathcal{Q}(\mathcal{R},\mathcal{P})\geq \frac{1}{8}\sum\limits_{r=0}^{7}\langle \lambda_{A^{\prime}R_{1}}^{T}\otimes\lambda_{B^{\prime}R_{2}}^{T}\otimes\lambda_{C^{\prime}R_{3}}^{T},GHZ_{A^{\prime}B^{\prime}C^{\prime}}^{r}\otimes R_{R_{1}R_{2}R_{3}}^{r}\rangle\\&\geq \frac{1}{8}(\frac{4s(\eta_{A^{\prime}})\sum_{r=0}^{7}p_{r}G(\epsilon^{r})-t(\eta_{A^{\prime}})\sum_{r=0}^{7}p_{r}}{(1+\eta_{B^{\prime}})(1+\eta_{C^{\prime}})})
\\&=\frac{4s(\eta_{A^{\prime}})q-t(\eta_{A^{\prime}})}{8(1+\eta_{B^{\prime}})(1+\eta_{C^{\prime}})}.
\end{split}
\end{equation}
As $0.5< q\leq 1$, the numerator is positive. Hence, one obtains the result
\begin{equation}\nonumber
\begin{split}
\mathcal{Q}(\mathcal{R},\mathcal{P})&\geq\frac{1}{2(1+2\sqrt{q(1-q)})^{2}}(\frac{2q-1}{\sqrt{(1-\eta_{A^{\prime}}^{2})}}+\frac{1}{(1+\eta_{A^{\prime}})})\\&\geq\frac{1}{2(1+2\sqrt{q(1-q)})^{2}}\\&\cdot\mathop{min}\limits_{u\in[0,2\sqrt{q(1-q)}]}(\frac{2q-1}{\sqrt{(1-u^{2})}}+\frac{1}{(1+u)}).
\end{split}
\end{equation}
Here, we have presented a lower bound for the quality of unknown joint measurement performed by Roy under certain white noise. The quality implies the ability that the unknown measurement try to simulate the ideal three-qubit GHZ-state measurement. Therefore, a robust self-testing statement for three-qubit GHZ-state measurement has been shown.

\bibliographystyle{unsrt}

\end{document}